\theoremstyle{plain}
\newtheorem{theorem}{Theorem}
\newtheorem{proposition}{Proposition}
\newtheorem{lemma}{Lemma}
\theoremstyle{definition}
\newtheorem{definition}{Definition}
\newtheorem{remark}{Remark}
\newtheorem*{ack}{Acknowledgements}
\begin{document}

\title[Frenkel-Kontorova models on quasicrytals]
      {Equilibrium configurations for generalized Frenkel-Kontorova models on quasicrystals}

\author{Rodrigo Trevi\~no}
\address{Department of Mathematics\\
         University of Maryland, College Park}
\email{rodrigo@math.umd.edu}
\date{\today}

\begin{abstract}
I study classes of generalized Frenkel-Kontorova models whose potentials are given by almost-periodic functions which are closely related to aperiodic Delone sets of finite local complexity. Since such Delone sets serve as models for quasicrystals, this setup presents Frenkel-Kontorova models for the type of aperiodic crystals which have been discovered since Shechtman's discovery of quasicrystals. Here I consider models with configurations $u:\mathbb{Z}^r \rightarrow \mathbb{R}^d$, where $d$ is the dimension of the quasicrystal, for any $r$ and $d$. The almost-periodic functions used for potentials are called pattern-equivariant and I show that if the interactions of the model satisfies a mild $C^2$ requirement, and  if the potential satisfies a mild non-degeneracy assumption, then there exist equilibrium configurations of any prescribed rotation rotation number/vector/plane. The assumptions are general enough to satisfy the classical Frenkel-Kontorova models and its multidimensional analoges. The proof uses the idea of the anti-integrable limit.
\end{abstract}
\dedicatory{Dedicated to Rafael de la Llave on the occasion of his 60th birthday.}

\maketitle
\section{Introduction and statement of results}
The classical Frenkel-Kontorova model, first introduced to describe dislocations in solids \cite{FK}, aims to describe a chain of particles each interacting with its nearest neighbors in the presence of an external potential (see \cite{BK:FK} for other physical systems which are also described by this model). In the original model, the particles in the chain are indexed by $\mathbb{Z}$ through a function $u:\mathbb{Z}\rightarrow \mathbb{R}$, called a \textbf{configuration}. The system is given by the following formal sum, called the \textbf{action}:
\begin{equation}
\label{eqn:action1}
\mathcal{A}(u) = \sum_{i\in \mathbb{Z}}  \frac{1}{2} \left(\frac{du_i}{dt}\right)^2 + \frac{1}{2}( u_i-u_{i+1})^2 + (1-\cos(u_i)),
\end{equation}
where, as is customary, one writes $u_i := u(i)$. The term $\frac{1}{2}( u_i-u_{i+1})^2$ represents the nearest-neighbor interaction of the particles while the $1-\cos$ term represents the potential given by the (periodic) medium. Note that the sum is formal, so it does not need to converge a priori. The Euler-Lagrange equations for this action give
$$  \frac{\partial}{\partial u_i}\left(\sum_{i\in \mathbb{Z}}\frac{1}{2}( u_i-u_{i+1})^2 + (1-\cos(u_i)) \right) - \frac{d^2 u_i}{dt^2}= 0$$
for all $i\in\mathbb{Z}$. A particular type of solution to the variational problem above is the one where $\ddot{u}_i = 0$ for all $i$. This is called an \textbf{equilibrium configuration}. The condition for being an equilibrium configuration for (\ref{eqn:action1}) can be rewritten as
\begin{equation}
\label{eqn:cond0}
  u_{i+1} - 2u_i + u_{i-1} + \sin(u_i) = 0
 \end{equation}
for all $i\in\mathbb{Z}$. As such, although the action (\ref{eqn:action1}) may only formally be defined, the condition (\ref{eqn:cond0}) is well defined for all $i\in\mathbb{Z}$. It turns out that equilibrium configurations satisfying (\ref{eqn:cond0}) are in one to one correspondence with orbits of the \textbf{standard map} $f:\mathbb{R}^2\rightarrow \mathbb{R}^2$ by performing a change of coordinates which brings the Lagrangian formulation to a Hamiltonian one \cite{gole}. Since the potential is periodic then the map can be studied on the torus $\mathbb{R}^2/2\pi\mathbb{Z}^2$.

Let $u$ be an equilibrium configuration for (\ref{eqn:cond0}). The quantity
$$\lim_{i\rightarrow\infty} \frac{u_i}{i} = \rho(u),$$
if it exists, is called the \textbf{rotation number} of $u$. The question of whether there exist equilibrium configurations with prescribed rotation numbers is the same as a question of existence of invariant sets on the torus whereon the dynamics of the standard map are conjugate to a circle rotation with prescribed rotation number, which are usually defined by a Diophantine property. A rich theory has been developed to answer there types of questions (e.g. KAM \cite{llave:tutorial}, Aubry-Mather \cite{forni-mather}, etc). There are far too many relevant references to include that, in order not to offend anyone, I will not list any but refer the reader to the introduction of \cite{LlV:ground} which has a very thorough discussion.

There are several ways of generalizing the action given by $\mathcal{A}$ in (\ref{eqn:action1}). For example, one could take the potential to be any periodic function $V$, instead of the potential $(1-\cos)$, and still have a correspondence between equilibrium configurations and orbits of an area-preserving map of the torus. Another variation is changing the interaction between elements of a configuration: instead of nearest-neighbor interaction, the interactions can be more complex, even interactions with infinitely many elements. Or the interaction could be changed to be of the form $\frac{1}{2}( u_i-u_{i+1})^p$ for $p>2$, the equilibrium configurations of which are no longer in correspondence with orbits of a map.

One could also take the indices of the configurations to take values in a higher rank lattice: instead of looking for configurations of the form $u:\mathbb{Z}\rightarrow\mathbb{R}$, one can look for configurations of the form $u:\mathbb{Z}^r\rightarrow\mathbb{R}$ for some $r>1$. One such example is the multidimensional Frenkel-Kontorova model:
\begin{equation}
\label{eqn:actionHigh}
\mathcal{A}(u) = \sum_{i\in \mathbb{Z}^r}\frac{1}{2r}\sum_{|i-j| = 1}\| u_i-u_{j}\|^2 + (1-\cos(u_i)),
\end{equation}
the equilibrium configurations of which are solutions to the variational equation
$$(\triangle u)_i + \sin(u_i) = 0,$$
where $\triangle$ denotes the discrete Laplacian. In some cases, these configurations have a dynamical interpretation as orbits of maps in higher dimensional spaces. Again, interactions for higher rank lattices can be made more complex, and so can the potential. Finally, one could change the configurations to take values in a higher dimensional space, that is, configurations are functions $u:\mathbb{Z}^r\rightarrow \mathbb{R}^d$ for some $r>0$ and $d>0$. The general setup considered here is formulated using similar language to the one introduced in \cite{CandelLlave}.

Let $\mathcal{C}_{\mathbb{Z}^r,d}$ be the set of functions on $\mathbb{Z}^r$ with values in $\mathbb{R}^d$. Elements of this set are called \textbf{configurations}, where one usually writes $u_i := u(i)$.
\begin{definition}
  An \textbf{interaction} is a collection $\bar{\mathcal{S}} \subset \mathcal{P}(\mathbb{Z}^r)$ of finite subsets of $\mathbb{Z}^r$ along with real-valued maps $\{ H_B: B\in\bar{\mathcal{S}}\}$ that to each configuration $u$ associates a number $H_B(u)$ which depends only on the restriction of $u$ to $B$. Moreover, for a function $V:\mathbb{R}^d\rightarrow\mathbb{R}$, an \textbf{interaction with potential $V$} is an interaction $\{H_B\}$ with the aditional requirement that $\{i\}\in \bar{\mathcal{S}}$ for all $i\in\mathbb{Z}^r$ and if $B = \{i\}$ for some $i$ then $H_B(u) = V(u_i)$.
\end{definition}
For an interaction $\{H_B\}_{B\in \bar{\mathcal{S}}}$, let
$$\mathcal{S}  = \bar{\mathcal{S}} - \bigcup_{i\in\mathbb{Z}^r} \{i\}$$
which is the set of indices in $\mathcal{S}$ which have cardinality greater than one, and for any $i\in\mathbb{Z}^r$, denote by $\mathcal{S}_i\subset\mathcal{S}$ all the elements of $\mathcal{S}$ which contain $i$.
A \textbf{generalized Frenkel-Kontorova model} is given by an action of the form
$$\mathcal{A}(u) = \sum_{B\in\bar{\mathcal{S}}} H_B(u).$$
Moreover, if $\{H_B\}$ is an interaction with potential $V$, one has
\begin{equation}
  \label{eqn:action}
\mathcal{A}_V(u) = \sum_{B\in\mathcal{S}} H_B(u) + \sum_{i\in\mathbb{Z}^r}V(u_i),
\end{equation}
where where $V:\mathbb{R}^d\rightarrow\mathbb{R}$ is the potential. An \textbf{equilibrium configuration} for the generalized Frenkel-Kontorova model (\ref{eqn:action}) is a configuration $u$ satisfying the variational equation
\begin{equation}
\label{eqn:equilibrium}
\nabla_{u_i}\mathcal{A}(u) = \nabla_{u_i}\sum_{B\in\mathcal{S}} H_B(u) + \nabla V(u_i)  = 0
\end{equation}
for all $i\in\mathbb{Z}^r$. More specifically, since $H_B(u)$ is a function of $d|B|$ variables with values in $\mathbb{R}$, for any $i\in\mathbb{Z}^r$ and for any $B\subset\mathcal{S}$ containing $i$, $\nabla_{u_i}\sum_B H_B(u)$ is the vector in $\mathbb{R}^d$ obtained by taking $d$ partial derivatives with respect to the variables defined by the $d$ coordinates of $u_i$. Note that even if (\ref{eqn:action}) is just defined formally, (\ref{eqn:equilibrium}) can make sense under mild assumptions on the interaction $\{H_B\}$ (e.g. nearest-neighboor interaction).



The notion of rotation number can also be generalized for generalized Frenkel-Kontorova models as follows. Let $h:\mathbb{Z}^r\rightarrow \mathbb{R}^d$. A configuration $u$ is \textbf{of type $h$} if
\begin{equation}
\label{eqn:type}
\sup_{i\in\mathbb{Z}^r} \|u_i - h(i)\| < \infty.
\end{equation}
\begin{remark}
  \label{rem:type}
There are several classes of functions $h:\mathbb{Z}^r\rightarrow \mathbb{R}^d$ of increasing complexity which can be considered. For example, elements $h\in\mathrm{Hom}(\mathbb{Z}^r,\mathbb{R}^d)$ can be thought of as the natural generalization of rotation numbers. Another type of model configuration is something which could be called \textbf{heteroclinic}, and it is constructed as follows: given $ d,r\in\mathbb{N}$, let $P_{\sigma}$ be a subspace of $\mathbb{R}^r$ of dimension $\min\{d,r\}$ spanned by $d$ vectors $\sigma = \{v_1,\dots, v_d\}\in (\mathbb{R}^r)^d$ and let $M_\sigma$ be the matrix whose columns are the vectors in $\sigma$. Pick a basis $\{b_1,\dots, b_r\}$ of $\mathbb{Z}^r$ (where we consider the $b_i$ row vectors), pick two elements $\sigma^\pm_i\in (\mathbb{R}^r)^d$ for each $1\leq i \leq r$, and define $M^\pm_i:= M_{\sigma^\pm_i}$. Define $h:\mathbb{Z}^r\rightarrow \mathbb{R}^d$ as
  $$h\left(\sum_{i=1}^r n_i b_i\right) = \sum_{i=1}^r s(n_i)n_i b_i\cdot M_i^{s(n_i)},$$
  where $s(n_i)\in \{+,-\}$, depending on the sign of $n_i$, for all $\sum_{i=1}^r n_i b_i\in \mathbb{Z}^r$. As such, for any $j = 1,\dots, r$ and $g\in \mathbb{Z}^r$, we have that
  $$\sup_{n\in\mathbb{N}} \|h(g  \pm n b_j) \pm nb_j\cdot  M^\pm_j   \| < \infty.$$
  In other words, in the direction $nb_i\rightarrow \pm \infty$, $h$ behaves more and more like $h^\pm_i\in\mathrm{Hom}(\mathbb{Z}^r,\mathbb{R}^d)$ defined as $h^\pm_i(g) = g\cdot M^\pm_i$.

  When $r=d=1$, in the classical Frenkel-Kontorova model, since there is a bijection between equilibrium configurations and orbits of the standard map, there are types of ``chaotic'' configurations which can be modeled by a mixing subshift of the full 2-shift \cite{AA:AI}.
\end{remark}
  
  The set of configurations $u$ of type $h$ is denoted by $\mathcal{C}_{r,d}^h$. We denote by $\mathcal{L}(r,d)$ the set of functions $h:\mathbb{Z}^r\rightarrow \mathbb{R}^d$ which satisfy a Lipschitz condition. In such case the Lipschitz constant will be denoted by $L_h$. For $R\geq 0$, let
$$\mathcal{C}_{r,d}^{h,R}:= \{ u\in \mathcal{C}_{\mathbb{Z}^r,d}: \|u_i -h(i)\| \leq R \mbox{ for all }i\in \mathbb{Z}^r\}.$$
Then
$$ \mathcal{C}_{r,d}^h =  \bigcup_{R>0}\mathcal{C}_{r,d}^{h,R}.$$

In this paper I will show that equilibrium configurations exist of general types for generalized Frenkel-Kontorova models on quasicrystals. The word ``quasicrystal'' is generally taken to describe a type of crystal which exhibits a non-periodic structure \cite{lifshitz:What}. The first such material was discovered by Shechtman \cite{schechtman}. Although their structure is non-periodic, their atomic structure has long-range interactions and in many well-known cases they exhibit a great deal of symmetry and self-similarity.

The usual mathematical models for these materials are aperiodic Delone sets and/or aperiodic tilings. A \textbf{Delone set} $\Lambda\subset \mathbb{R}^d$ is a discrete subset of $\mathbb{R}^d$ satisfying two conditions:
\begin{description}
\item[Uniform discreteness]
There exists an $r_\Lambda>0$ such that any closed ball $B_{r_\Lambda}$ of radius $r_\Lambda$ centered anywhere in $\mathbb{R}^d$ intersects at most one point of $\Lambda$. This is also called the \textbf{packing radius};
\item[Relative denseness]
There exists a $R_\Lambda > 0$ such that any closed ball $B_{R_\Lambda}$ of radius $R_\Lambda$ centered anywhere in $\mathbb{R}^d$ intersects at least one point of $\Lambda$. This is also called the \textbf{covering radius}.
\end{description}
Delone sets and tilings offer views which are dual to each other. For example, the Delaunay triangulation defined by a Delone set defines a tiling with the same properties that the Delone set has. Conversely, given a tiling, the center of mass of each tile gives a Delone set with the same properties as the tiling.

Well known examples of the typed of tilings/sets which will be relevant here include:
\begin{itemize}
\item The Fibonacci and Thue-Morse tilings in one dimension,
\item the Penrose tiling and the Ammann-Beenker tilings in two dimensions,
  \item icosahedral tilings in three dimensions,
\end{itemize}
see \cite{BG:book1} for more details on these and many more examples. Two common properties for all the examples listed above are that they are repetitive and have finite local complexity (see \S \ref{sec:Delone} for definitions), which are reasonable properties to assume for models of quasicrystals. This paper will deal exclusively with Delone sets, but if the reader is more comfortable with tilings, then the reader can think of tilings.

The types of actions considered here are of the form (\ref{eqn:action}) with aperiodic potentials which depend locally on aperiodic point patterns and it will be shown that these models admit equilibrium configurations of very general prescribed types. For a Delone set $\Lambda$, the class of functions which will be considered for potentials are called $\Lambda$-equiviariant, or pattern-equivariant functions. In what follows, $\varphi_t(\Lambda)= \Lambda+t$ denotes the translation of $\Lambda$ by $t\in\mathbb{R}^d$.
\begin{definition}
A continuous function $f:\mathbb{R}^d\rightarrow\mathbb{R}$ is \textbf{$\Lambda$-equivariant} with range $R$ if there exists an $R>0$ such that if
$$\varphi_x(\Lambda)\cap B_R = \varphi_y(\Lambda)\cap B_R$$
then $f(x) = f(y)$. By the duality between tilings and Delone sets, there is an analogous notion for a tiling $\mathcal{T}$: a function $f$ is $\mathcal{T}$-equivariant if there is a $R>0$ such that the value of $f$ at $x$ only depends on the local configuration of the tiling $\mathcal{T}$ inside the ball of radius $R$ around $x$.
\end{definition}
In other words, the value of a $\Lambda$-equivariant function at any point $x\in\mathbb{R}^d$ depends only on the finite pattern of the set $\Lambda$ around the point $x$. An example of the types of potentials which are considered here and which is helpful to keep in mind is the following. Let $\Lambda\subset \mathbb{R}^d$ be a Delone set (e.g. the set of vertices of the Penrose tiling) and $\zeta:\mathbb{R}^d\rightarrow \mathbb{R}$ a smooth, radially symmetric bump function whose support is contained in a very small ball and of integral one. Then the function
$$V_\Lambda = \sum_{x\in\Lambda} \zeta * \delta_x$$
is a $\Lambda$-equivariant function. The function $V_\Lambda$ above is physically significant: by Dworkin's theorem \cite{dworkin}, the spectral measure associated with its (averaged) autocorrelation function is the diffraction measure of the material modeled by $\Lambda$.

As far as I know, there have been two works which have considered $\Lambda$-equivariant potentials for quasicrystals (that is, for repetitive, aperiodic $\Lambda$ of finite local complexity) for the Frenkel-Kontorova model are \cite{GGP:minimalFK, GPT:calibrated}. In \cite{GGP:minimalFK}, the authors show that, under some mild assumptions for the interactions, for a general class of Frenkel-Kontorova models with configurations $u:\mathbb{Z}\rightarrow\mathbb{R}$ and $\Lambda$-equivariant potentials, there exist minimal configurations (which are a type of equilibrium configurations) with any prescribed rotation number. In \cite{GPT:calibrated}, the authors aim to extend Aubry-Mather theory of the classical periodic model to the aperiodic setting. The authors show that under certain common hypothesis of the interaction (e.g. superlinearity, coerciveness, ferromagnetic) there exist calibrated configurations (which are a type of minimal configurations, and thus a type of equilibrium configurations) $u:\mathbb{Z}\rightarrow \mathbb{R}^d$ for Frenkel-Kontorova models with pattern equivariant potentials.

The purpose of this paper is to add to this short but growing literature on the subject extend these types of results for quasicrystals of dimension greater than one and configurations indexed by arbitrary lattices.
\begin{definition}
  \label{def:non-deg}
  A bounded $C^3$ function $V:\mathbb{R}^d\rightarrow \mathbb{R}$ is \textbf{non-degenerate} if there exists a point $z\in\mathbb{R}^d$ such that $\nabla V(z) = \bar{0}$ and whose Hessian at $z$ is invertible, i.e., if it has a non-degenerate critical point.
\end{definition}
The types of potentials which will be considered in this work have the following property:
\begin{enumerate}[label=\textbf{H\arabic*}]
  \setcounter{enumi}{0}
\item\label{H:non-deg} The potential $V$ has a non-degenerate critical point.
\end{enumerate}
The major assumption on the part of the interaction $\{H_B\}_{B\subset \mathcal{S}}$ which does not involve the potential will be
\begin{enumerate}[label=\textbf{H\arabic*}]
    \setcounter{enumi}{1}
\item\label{H:C2} for $h:\mathbb{Z}^r\rightarrow \mathbb{R}^d$ the interaction is $C^2$-\textbf{bounded} on $\mathcal{C}_{r,d}^{h,R}$: for all $R>0$ there exists a $B_{R,h,\{H_B\}}$ such that for any $B\in \mathcal{S}$ it will be assumed that $H_B$ is $C^2$ with
  \begin{equation}
    \label{eqn:C2}
     \sup_{i\in\mathbb{Z}^r} \sup_{u\in \mathcal{C}_{r,d}^{h,R}}\left\{  \left\| \nabla_{u_i} \left(\sum_{B\in \mathcal{S}_i} H_B\right)(u)\right\| +   \left\| \mathbf{H}\left(\sum_{B\in\mathcal{S}_i}H_B \right)(u) \right\|_{1,1} \right\}<B_{R,h,\{H_B\}},
  \end{equation}
where $\mathbf{H}(f)(u)$ is the Hessian of the function $f$ at $u$ and $\|\cdot\|_{1,1}$ denotes the $1,1$-norm $\|A\|_{1,1} = \sum_{i,j} |a_{i,j}|$ in the appropriate space of square matrices, which depends on $i$.
\end{enumerate}
Note that the condition \ref{H:C2} is one involving both the interaction $\{H_B\}$ as well as the type $h$. Some comments about hypothesis \ref{H:C2}, including how it is verified in a nearest-neighbor-type of interaction, can be found in \S \ref{subsec:H2} below following the statement of the second theorem.

The equilibrium configurations for (\ref{eqn:action}) will be coded by a set defined by a non-degenerate potential. More specifically, for a $C^r$, $\Lambda$-equivariant function $V$, $r>1$, let
$$\mathcal{Z}_V := \{x\in\mathbb{R}^d : \nabla V(x) = \bar{0}\mbox{ and }\mathrm{det}\,\mathbf{H}(V)(z)\neq 0\}.$$
This set will be used to code equilibrium configurations. That is, for
$$\mathcal{W}_{V} := \left\{ \mathbf{a}\in \mathcal{C}_{\mathbb{Z}^r, d} : \mathbf{a}_i\in \mathcal{Z}_V  \mbox{ for all }i\in\mathbb{Z}^r \right\},$$
and $\mathbf{a}\in \mathcal{W}_{V}$, there will be equilibrium configurations with values close to those of $\mathbf{a}$.
\begin{theorem}
\label{thm:main}
Let $\Lambda\subset \mathbb{R}^d$ be a repetitive Delone set of finite local complexity. Let $\{H_B\}_{\bar{\mathcal{S}}}$ be an interaction with potential $V$ for configurations $u:\mathbb{Z}^r\rightarrow\mathbb{R}^d$, where $V$ is $C^3$ and $\Lambda$-equivariant. Suppose $\{H_B\}_{B\in\bar{\mathcal{S}}}$ and $h:\mathbb{Z}^r\rightarrow \mathbb{R}^d$ satisfy \ref{H:non-deg} and \ref{H:C2}. Then there exists a $\lambda_0$ such that for any $\lambda>\lambda_0$ and for any $\mathbf{a}\in \mathcal{W}_V\cap \mathcal{C}_{\mathbb{Z}^r,d}^h$ there exists a unique equilibrium configuration $u$ of type $h$ for
\begin{equation}
  \label{eqn:main1}
  \mathcal{A}_V^\lambda(u) = \sum_{B\in\mathcal{S}} H_B(u) + \lambda \sum_{i\in\mathbb{Z}^r} V(u_i)
\end{equation}
which satisfies $\sup_i \|u_i - \mathbf{a}_i\| < \infty$.
\end{theorem}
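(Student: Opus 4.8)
The plan is to run an anti-integrable-limit / contraction-mapping argument: for $\lambda$ large the equilibrium equation for $\mathcal{A}_V^\lambda$ is a small perturbation of the ``uncoupled'' equation $\nabla V(u_i)=\bar0$, whose relevant solutions are exactly the configurations taking values in $\mathcal{Z}_V$, and we continue the prescribed $\mathbf{a}$ to a genuine equilibrium by Banach's fixed point theorem. Fix $\sigma$ and $\mathbf{a}\in\mathcal{W}_V\cap\mathcal{C}^\sigma_{\mathbb{Z}^r,d}$, put $R_{\mathbf{a}}:=1+\sup_i\|\mathbf{a}_i-i\cdot M_\sigma\|<\infty$, and search for $u=\mathbf{a}+w$ with $w$ in the ball $\bar B_\delta(0)$ of radius $\delta\le 1$ in $X:=\ell^\infty(\mathbb{Z}^r,\mathbb{R}^d)$, so that automatically $u\in\mathcal{C}^{\sigma(R_{\mathbf{a}})}_{r,d}$. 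Writing $G_i(u):=\nabla_{u_i}\big(\sum_{B\in\mathcal{S}_i}H_B\big)(u)$ and noting that $H_B$ with $i\notin B$ does not depend on $u_i$, the configuration $u=\mathbf{a}+w$ solves \eqref{eqn:equilibrium} for $\mathcal{A}_V^\lambda$ if and only if $\nabla V(\mathbf{a}_i+w_i)=-\lambda^{-1}G_i(\mathbf{a}+w)$ for every $i$.

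I would then Taylor-expand the left-hand side. Since $\mathbf{a}_i\in\mathcal{Z}_V$ we have $\nabla V(\mathbf{a}_i)=\bar0$ and $\mathbf{H}_i:=\mathbf{H}(V)(\mathbf{a}_i)$ invertible, so $C^3$-boundedness of $V$ gives $\nabla V(\mathbf{a}_i+w_i)=\mathbf{H}_i w_i+\rho_i(w_i)$ with $\|\rho_i(w_i)\|\le \tfrac12\|V\|_{C^3}\|w_i\|^2$ and $\|\rho_i(w_i)-\rho_i(w_i')\|\le\|V\|_{C^3}\delta\,\|w_i-w_i'\|$ on $\bar B_\delta(0)$. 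The equilibrium equation becomes the fixed-point equation $w=\Phi(w)$, where $\Phi(w)_i:=\mathbf{H}_i^{-1}\big(-\lambda^{-1}G_i(\mathbf{a}+w)-\rho_i(w_i)\big)$. Using the two uniform bounds $\kappa:=\sup_i\|\mathbf{H}_i^{-1}\|_{op}<\infty$ and, from \ref{H:C2}, the constant $\beta:=B_{R_{\mathbf{a}},\sigma,\{H_B\}}$ of \eqref{eqn:C2} — which bounds $\sup_i\|G_i(u)\|$ on $\mathcal{C}^{\sigma(R_{\mathbf{a}})}_{r,d}$ and, via the Hessian term (together with finiteness of each $\mathcal{S}_i$), the Lipschitz constant of $u\mapsto(G_i(u))_i$ from $\mathcal{C}^{\sigma(R_{\mathbf{a}})}_{r,d}$ into $X$ — one gets on $\bar B_\delta(0)$ the estimates $\|\Phi(w)\|_\infty\le\kappa\big(\lambda^{-1}\beta+\tfrac12\|V\|_{C^3}\delta^2\big)$ and $\|\Phi(w)-\Phi(w')\|_\infty\le\kappa\big(\lambda^{-1}\beta+\|V\|_{C^3}\delta\big)\|w-w'\|_\infty$. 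Choosing $\delta$ small enough that $\kappa\|V\|_{C^3}\delta\le\tfrac14$ and then $\lambda_0$ with $\kappa\lambda_0^{-1}\beta\le\min\{\tfrac14,\delta/4\}$ makes $\Phi$ a $\tfrac12$-contraction of $\bar B_\delta(0)$ into itself for all $\lambda>\lambda_0$; the Banach fixed point theorem yields a unique $w$, hence a unique equilibrium configuration $u=\mathbf{a}+w$ of \eqref{eqn:main1}, which is of type $\sigma$ because $\mathbf{a}$ is and $w$ is bounded, with $\sup_i\|u_i-\mathbf{a}_i\|\le\delta<\infty$. To promote uniqueness from $\bar B_\delta(0)$ to all configurations with $\sup_i\|u_i-\mathbf{a}_i\|<\infty$, an a priori estimate suffices: any such bounded equilibrium perturbation $w$ satisfies $\|w_i\|\le\kappa\big(\lambda^{-1}\beta'+\tfrac12\|V\|_{C^3}\|w_i\|^2\big)$ with $\beta'$ the \ref{H:C2}-bound on $\mathcal{C}^{\sigma(R_{\mathbf{a}}+\|w\|_\infty)}_{r,d}$, and for $\lambda$ large the small root of this quadratic inequality forces $\|w\|_\infty\le\delta$, so $w$ coincides with the fixed point already produced.

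The main obstacle is the uniform non-degeneracy bound $\kappa=\sup_i\|\mathbf{H}(V)(\mathbf{a}_i)^{-1}\|_{op}<\infty$: each individual Hessian is invertible by definition of $\mathcal{Z}_V$, but the points $\mathbf{a}_i$ range over an unbounded region of $\mathbb{R}^d$, so invertibility at each $\mathbf{a}_i$ does not by itself preclude $\det\mathbf{H}(V)(\mathbf{a}_i)\to0$. This is precisely where the quasicrystalline hypotheses are used: because $\Lambda$ has finite local complexity and is repetitive its hull is compact, and because $V$ is $\Lambda$-equivariant with some finite range the map $x\mapsto(\nabla V(x),\mathbf{H}(V)(x))$ depends only on the local pattern of $\Lambda$ near $x$, of which there are finitely many up to a bounded translation; hence this map has precompact image, and the structural facts on $\Lambda$-equivariant functions from the earlier sections give a uniform lower bound on $|\det\mathbf{H}(V)|$ over the configurations coming from $\mathcal{W}_V$, which is what allows $\lambda_0$ to be chosen independently of $\sigma$ and $\mathbf{a}$. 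The only other technical care needed is verifying that $u\mapsto(G_i(u))_i$ is genuinely Lipschitz from $\ell^\infty$ to $\ell^\infty$, which is where the operator-norm Hessian bound of \eqref{eqn:C2} and local finiteness of the interaction enter; once these are in place, everything else is the routine bookkeeping of the anti-integrable limit.
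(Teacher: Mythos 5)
Your proposal is correct and follows essentially the same route as the paper: an anti-integrable-limit argument that recasts the equilibrium equation $\nabla V(u_i)=-\lambda^{-1}\mathcal{Q}_i(u)$ as a fixed-point problem on an $\ell^\infty$-ball around $\mathbf{a}$ and applies the Banach fixed point theorem for $\lambda$ large, with the type-$\sigma$ property inherited from $\mathbf{a}$. The only real difference is cosmetic: you write the operator as $\mathbf{H}(V)(\mathbf{a}_i)^{-1}$ applied to the right-hand side plus an explicit Taylor remainder, whereas the paper packages the same data into the local inverse $\mathcal{K}_{\mathbf{a}_i}$ of $\nabla V$ from the inverse function theorem; the uniformity issue you correctly single out (a lower bound on $|\det\mathbf{H}(V)|$ over $\mathcal{Z}_V$) is precisely what the paper's uniform bound $\|D\mathcal{K}_z\|_{op}<K_V$, asserted via finite local complexity, amounts to, and your closing a priori estimate for uniqueness beyond the small ball (which the paper does not attempt) is the one shaky step, since the quadratic inequality also admits the large root pointwise.
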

Some of the best known Delone sets, such as those formed by the vertex set of the Penrose tiling, can be built from substitution rules, which give them the property of self-similarity. More generally, in the case of a self-affine Delone set $\Lambda$, one can use the self-affinity property to obtain equilibrium configurations.

More specifically, by definition (see \S \ref{sec:Delone}), if $\Lambda$ is self-affine, then there exists an expanding matrix $A\in GL^+(d,\mathbb{R})$ which represents the linear part of the self-affinity. In particular, if $V$ is a $\Lambda$-equivariant function, then so is the function $A^*V$. The function $A^* V$ is called a \textbf{scaling} of the function $V$. Indeed, for any $n\in\mathbb{Z}$, the function $(A^n)^*V$ is a scaling of $V$, where $(A^n)^*V$ is the pull-back of $V$ by the linear map $A^n$.

It is important to note the following subtlety: in general, if $\Lambda'$ is a Delone set, $B\in GL^+(d,\mathbb{R})$ an expanding matrix, and $f$ a $\Lambda'$-equivariant function, then $B^* f$ is not necessarily $\Lambda'$-equivariant. That is, rescaling any $\Lambda$-equivariant function by some expanding matrix will in general lose its relation to a Delone set $\Lambda$. What makes a self-affine Delone set $\Lambda$ special is that it comes with a matrix through which the scaling of a $\Lambda$-equivariant function is still $\Lambda$-equivariant. The examples mentioned above in one, two, and three dimensions are all self-similar, so they fall in this class.

Equilibrium configurations in the self-affine case will also be coded by sets, but they will be rescaled sets of zeros of non-degenerate potentials. More specifically, for $n\in\mathbb{Z}$, let $\mathcal{W}_V^n = \{\mathbf{a}\in\mathcal{C}_{\mathbb{Z}^r,d}: \mathbf{a}_i\in A^{-n}(\mathcal{Z}_V)\}$.
\begin{theorem}
\label{thm:main2}
Let $\Lambda\subset\mathbb{R}^d$ be a self-affine, repetitive Delone set of finite local complexity. Let $\{H_B\}_{\bar{\mathcal{S}}}$ be an interaction with potential $V$ for configurations $u:\mathbb{Z}^r\rightarrow\mathbb{R}^d$, where $V$ is $C^3$ and $\Lambda$-equivariant. Suppose $\{H_B\}_{B\in\bar{\mathcal{S}}}$ and $h:\mathbb{Z}^r\rightarrow \mathbb{R}^d$ satisfy \ref{H:non-deg} and \ref{H:C2}. Then there exists an $N$ such that for any $n>N$ and $\mathbf{a}\in \mathcal{W}_V^n\cap \mathcal{C}_{\mathbb{Z}^r,d}^h$ there exists a unique equilibrium configuration $u$ of type $h$ for
\begin{equation}
  \label{eqn:main2}
  \mathcal{A}^n_V(u) = \sum_{B\in\mathcal{S}} H_B(u) + \sum_{i\in\mathbb{Z}^r} (A^n)^*V(u_i)
  \end{equation}
which satisfies $\sup_i \|u_i - \mathbf{a}_i\| < \infty$, where $A$ is the self-affinity matrix associated with $\Lambda$.
\end{theorem}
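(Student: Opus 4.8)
The plan is to reuse the anti-integrable limit of the proof of Theorem~\ref{thm:main} almost verbatim, with the expanding matrix $A^n$ in the role of the large scalar $\lambda$. The one structural input is that, since $\Lambda$ is self-affine with matrix $A$, the scaled potential $(A^n)^*V$ is again $C^3$ and $\Lambda$-equivariant, and its set of non-degenerate critical points is precisely
\[
\mathcal{Z}_{(A^n)^*V}=A^{-n}(\mathcal{Z}_V),
\]
because $\nabla\big((A^n)^*V\big)(x)=(A^n)^{\top}\nabla V(A^n x)$ and $\mathbf{H}\big((A^n)^*V\big)(x)=(A^n)^{\top}\,\mathbf{H}(V)(A^n x)\,A^n$. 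Hence a configuration $\mathbf{a}\in\mathcal{W}_V^n$ is pointwise a non-degenerate critical point of $(A^n)^*V$, and $A$ being expanding makes the potential wells sharp: writing $\mathbf{a}_i=A^{-n}z_i$ with $z_i\in\mathcal{Z}_V$, the operator norm of $\mathbf{H}\big((A^n)^*V\big)(\mathbf{a}_i)^{-1}$ is at most $\|A^{-n}\|^2\,\|\mathbf{H}(V)(z_i)^{-1}\|$, which tends to $0$ as $n\to\infty$ while the interaction $\{H_B\}$ is untouched. That is exactly the regime the anti-integrable limit exploits.

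First I would record the uniform estimates on $V$ that finite local complexity and repetitivity provide (the same compactness input used in the proof of Theorem~\ref{thm:main}): there are constants $\mu>0$, $C_3<\infty$, $\delta_0>0$ with $\|\mathbf{H}(V)(z)^{-1}\|\le\mu^{-1}$ for all $z\in\mathcal{Z}_V$ and $\sup\{\|\nabla^3 V(y)\|:\mathrm{dist}(y,\mathcal{Z}_V)\le\delta_0\}\le C_3$. Fix $\sigma\in(\mathbb{R}^r)^d$ and $\mathbf{a}\in\mathcal{W}_V^n\cap\mathcal{C}_{\mathbb{Z}^r,d}^\sigma$, say $\mathbf{a}\in\mathcal{C}_{r,d}^{\sigma(R_0)}$, and put $R=R_0+1$. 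I would seek the equilibrium configuration in the form $u_i=A^{-n}(z_i+\eta_i)$ with $\eta=(\eta_i)_{i\in\mathbb{Z}^r}$ in the closed ball $\{\|\eta\|_\infty\le\delta\}\subset\ell^\infty(\mathbb{Z}^r,\mathbb{R}^d)$ for a small fixed $\delta\le\delta_0$. Using $\nabla\big((A^n)^*V\big)(u_i)=(A^n)^{\top}\nabla V(z_i+\eta_i)$ and Taylor's theorem at $z_i$, the equilibrium equation~\eqref{eqn:equilibrium} for~\eqref{eqn:main2} rewrites as the fixed-point problem $\eta=\mathcal{G}(\eta)$ with
\[
\mathcal{G}(\eta)_i=-\,\mathbf{H}(V)(z_i)^{-1}\!\left[(A^{n})^{-\top}\nabla_{u_i}\!\!\sum_{B\in\mathcal{S}_i}\!H_B\big(A^{-n}(z+\eta)\big)+\big(\nabla V(z_i+\eta_i)-\mathbf{H}(V)(z_i)\eta_i\big)\right].
\]
It is essential that this is written in the original configuration variables $u$: then $A^{-n}(z+\eta)$ stays in $\mathcal{C}_{r,d}^{\sigma(R)}$ with $R$ independent of $n$ (since $\|u_i-\mathbf{a}_i\|\le\|A^{-n}\|\delta\le1$ for $n$ large), so the gradient and Hessian of $\sum_{B\in\mathcal{S}_i}H_B$ are bounded by the fixed constant $B_{R,\sigma,\{H_B\}}$ of~\ref{H:C2}.

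Next I would check that $\mathcal{G}$ contracts the $\delta$-ball into itself for $\delta$ small and $n$ large. From~\eqref{eqn:C2} and $\|\mathbf{H}(V)(z_i)^{-1}\|\le\mu^{-1}$ one gets $\|\mathcal{G}(\eta)_i\|\le\mu^{-1}\big(\|A^{-n}\|\,B_{R,\sigma,\{H_B\}}+\tfrac12 C_3\delta^2\big)$; picking $\delta$ with $\mu^{-1}C_3\delta\le\tfrac14$ bounds the second term by $\delta/4$, and then picking $n$ large makes the first term $\le\delta/4$ because $\|A^{-n}\|\to0$. For the Lipschitz constant, the interaction term contributes at most $\mu^{-1}\|A^{-n}\|^2 B_{R,\sigma,\{H_B\}}$ (the Hessian bound of~\eqref{eqn:C2} plus two factors of $A^{-n}$) and the Taylor term at most $\mu^{-1}C_3\delta$, so $\mathrm{Lip}(\mathcal{G})\le\tfrac12$ for the same $\delta$ and $n$. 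The contraction mapping theorem gives a unique $\eta^{\ast}$; then $u_i:=A^{-n}(z_i+\eta^{\ast}_i)$ solves~\eqref{eqn:equilibrium} for~\eqref{eqn:main2}, lies in $\mathcal{C}_{r,d}^\sigma$ (it is within $\|A^{-n}\|\delta$ of the type-$\sigma$ configuration $\mathbf{a}$), and obeys $\sup_i\|u_i-\mathbf{a}_i\|\le\|A^{-n}\|\delta<\infty$. Uniqueness follows as in Theorem~\ref{thm:main}: an equilibrium $u$ of type $\sigma$ with $\sup_i\|u_i-\mathbf{a}_i\|<\infty$ satisfies $\|\nabla V(A^n u_i)\|=\|(A^n)^{-\top}\nabla_{u_i}\sum_{B\in\mathcal{S}_i}H_B(u)\|\le\|A^{-n}\|\,B_{R',\sigma,\{H_B\}}$ for all $i$, which for $n$ large confines each $A^n u_i$ to the basin of a single point of $\mathcal{Z}_V$ and thereby identifies $u$ with the fixed point just constructed.

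The genuine obstacle — the step that really uses that $\Lambda$ models a quasicrystal — is the uniformity package for $V$: a bounded $C^3$ $\Lambda$-equivariant potential must have uniformly non-degenerate, uniformly separated critical points with uniformly bounded third derivatives near them. This is where repetitivity and finite local complexity are used (via the compactness of the associated pattern space), and it is the same ingredient behind Theorem~\ref{thm:main}; with it in hand the rest is a routine Banach fixed point. A secondary but real point is to run the contraction in the original coordinates $u$ rather than in $A^n u$: rescaling would inflate the relevant type-radius by $\|A^n\|$, so $B_{R,\sigma,\{H_B\}}$ in~\ref{H:C2} could grow with $n$ and defeat the gain $\|A^{-n}\|\to0$, whereas leaving $u$ unscaled keeps that constant fixed and lets the expansion of $A$ do precisely the job that $\lambda$ does in Theorem~\ref{thm:main}.
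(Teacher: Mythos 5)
Your proposal is correct and follows essentially the same route as the paper: the anti-integrable limit implemented as a Banach fixed point for configurations pinned near $A^{-n}(\mathcal{Z}_V)$, with the expansion of $A^n$ playing the role of $\lambda$ and the uniformity of the non-degeneracy constants supplied by finite local complexity and repetitivity. The only difference is cosmetic — you set up the contraction as a Newton-type map built from $\mathbf{H}(V)(z_i)^{-1}$ plus a Taylor remainder, whereas the paper composes with the local inverse $\mathcal{K}_z$ of $\nabla V$ given by the inverse function theorem — and both yield the same estimates.
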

The route to proving Theorems \ref{thm:main} and \ref{thm:main2} is using the ideas of the anti-integrable limit, first introduced in \cite{AA:AI} (see \cite{BT:anti} for an introduction). The strategy applied to the cases at hand is outlined in \S \ref{sec:anti}. Before that, some necessary background is covered in \S \ref{sec:Delone}.
\begin{remark}
  There is a recent construction \cite{ST:random, T:TTT} of (globally) random substitution tilings, which include self-affine tilings as particular cases, but are more general. Along with the construction comes a renormalization perspective which allows one to easily generalize Theorem \ref{thm:main2} to the case of random substituion tilings. The point is that instead of using powers $A^n$ of the same matrix in (\ref{eqn:main2}), one would use a particular product of $n$ matrices $A^{(n)} = A_n^\mathcal{T}\cdots A_1^\mathcal{T}$ which have to do with the random structure of the tiling $\mathcal{T}$, and the same result will hold. The random substitution tiling setup falls outside the scope of this paper but the enthusiastic reader will have not trouble seeing how Theorem \ref{thm:main2} also holds for tilings and Delone sets constructed in a (globally) random fashhion through the consctructions in \cite{ST:random, T:TTT}.
\end{remark}
\subsection{On the hypothesis \ref{H:C2}}
\label{subsec:H2}
  Let me revisit the hypothesis \ref{H:C2} and show how this hypothesis is verified for the most basic example, one with nearest-neighbor interactions as in (\ref{eqn:actionHigh}). More precisely, suppose for an interaction we have
  $$\sum_{B\in \mathcal{S}_i} H_B(u)= \frac{-1}{2r}\sum_{|i-j| = 1}\| u_i-u_{j}\|^2$$
  for all $i\in\mathbb{Z}^r$. Writing the vector $u_i = (u_i^1,u_i^2,\dots, u_i^d)$ in coordinates and setting $e_j\in \mathbb{Z}^k$ to be the $j^{th}$ standard basis element, we have
  \begin{equation}
    \label{eqn:verify}
    \nabla_{u_i}\left(\sum_{B\in \mathcal{S}_i} H_B \right)(u) = \frac{1}{r}\sum_{j=1}^r \left(\begin{array}{c} u^1_{i-e_j} -2u_i^1+u^1_{i+e_j} \\ \vdots \\   u^d_{i-e_j} -2u_i^d+u^d_{i+e_j} \end{array} \right).
    \end{equation}
  Now, if $u\in  \mathcal{C}^{h,R}_{r,d}$, then $u^\ell_{i-e_j} -2u_i^\ell+u^\ell_{i+e_j} = h^\ell(i-e_j) - 2 h^\ell(i) + h^\ell(i+e_j) + E^\ell_i(u,h)$ with $|E^\ell_i(u,h)|\leq 4R$ for all $\ell$. If, in addition, $h\in \mathcal{L}(r,d)$, (that is $h$ satisfies a Lipschitz condition), then we have that $|h^\ell(i-e_j) - 2 h^\ell(i) + h^\ell(i+e_j)|\leq 2L_h$ for all $\ell$. Putting these bounds together in (\ref{eqn:verify}), we have that
  $$\left\| \nabla_{u_i} \left(\sum_{B\in \mathcal{S}_i} H_B\right)(u)\right\| \leq \sqrt{d^2(2L_h + 4R)^2},$$
  which, I should point out, is independent of $i\in \mathbb{R}^r$. Now, since the interaction is a quadratic polynomial in the variables involved, the Hessian will be constant and independent of $i\in\mathbb{Z}^r$, which means that its $\|\cdot\|_{1,1}$ norm will also be bounded independently of $i\in \mathbb{Z}^r$ and $R$. This fact and the bound above shows that the interaction satisfies \ref{H:C2}. In general, if $h\in\mathcal{L}(r,d)$ and there is a $R$ such that the element of a configuration indexed by $i$ only interacts with particles indexed in $B_R(i)$ in a smooth way, then the condition \ref{H:C2} will be satisfied. Note that heteroclinic types of $h$ as described in Remark \ref{rem:type}, satisfy a Lipschitz condition.
\subsection{The $r$ question}Since the theorems above hold to a great deal of generality with respect to rank of the lattice $\mathbb{Z}^r$ and dimension of the quasicrystal $\Lambda$, one may ask whether, given a Delone set $\Lambda\subset \mathbb{R}^d$, there is a natural choice of $r$ such that the lattice $\mathbb{Z}^r$ is somehow connected to $\Lambda$. There are two approaches to this, one which seems much more natural to me than the other.

The first is choosing $r = d$. This is related to problems of bounded displacement or bi-Lipschitz equivalence of lattices to Delone sets, for which there is a vast literature (see \cite[\S 6]{ACCDP:survey} for background and references). However, there are examples of self-affine Delone sets $\Lambda\subset \mathbb{R}^d$ such that $\mathbb{Z}^d$ is not bounded-displacement equivalent to $\Lambda$. Moreover, whenever the two point sets are related in such a way, the relationship does not preserve the group structure of $\mathbb{Z}^d$. This does not look very natural to me, although there may be reasons for which one may choose this lattice which I do not know.

It seems to me that a more natural choice for $r$, given $\Lambda\subset\mathbb{R}^d$, is the rank of the Delone set $r_{\Lambda}$ as defined by Lagarias \cite{Lagarias:Geo1}. This is a natural choice for two reasons. The first is that the rank is finite for a class of Delone sets which includes all Delone sets of finite local complexity, so it is defined for all the Delone sets which are considered in this paper. If $\Lambda$ has finite local complexity then the abelian group
$$[\Lambda] = \mathbb{Z}[\mathbf{x}:\mathbf{x}\in\Lambda]$$
is finitely generated. Let $r_\Lambda$ be the rank of $[\Lambda]$. This is \textbf{the rank of $\Lambda$} and $[\Lambda]$ is isomorphic to $\mathbb{Z}^{r_\Lambda}$. By picking a basis $\{\mathbf{v}_1,\dots, \mathbf{v}_{r_\Lambda}\}$ of $[\Lambda]$ one defines the \textbf{address map} $\varphi:[\Lambda]\rightarrow \mathbb{Z}^{r_\Lambda}$ by
$$\varphi\left(\sum_{i}n_i\mathbf{v}_i\right) = (n_1,\dots, n_{r_\Lambda}).$$
Lagarias proved that the address map, which is defined for a broad class of Delone, satisfies a Lipschitz bound if and only if the Delone set has finite local complexity.

The rank of $\Lambda$ represents the number of ``internal dimensions'' of $\Lambda$, or the rank of $\Lambda$ as a quasilattice. More precisely, by picking generators of $[\Lambda]$ and defining an address map, the elements of $\Lambda$ can be labeled by elements of the subset $Y_\Lambda = \varphi(\Lambda)\subset \mathbb{Z}^{r_\Lambda}\subset \mathbb{R}^{r_\Lambda}$ for which there is a linear projection $\psi:\mathbb{R}^{r_\Lambda}\rightarrow \mathbb{R}^d$ which satisfies $\psi(Y_\Lambda) = \Lambda$. 

With all this in mind, it is possible to write a generalized Frenkel-Kontorova model for quasicrystals in the spirit of the original model, that is, one for which the non-potential interactions are of ``nearest-neighbor'' type. One can do this by picking $\bar{\mathcal{S}} \subset \mathcal{P}(Y_\Lambda)\subset \mathcal{P}(\mathbb{Z}^{r_\Lambda})$ for interactions $\{H_B\}_{B\in\bar{\mathcal{S}}}$ with $\Lambda$-equivariant potentials.

More specificaly, let $\Lambda$ be a repetitive Delone set $\Lambda\subset\mathbb{R}^d$ of finite local complexity and define $Y_\Lambda\subset \mathbb{Z}^{r_\Lambda}$ as above by picking an address map. For $\tau>1$ and $z\in Y_\Lambda$, let $N_\tau(z) = \overline{B_\tau(z)}\cap Y_\Lambda$ and define the action
$$\mathcal{A}^\tau_V(u) = \sum_{i\in Y_\Lambda}\frac{1}{2}\sum_{j\in N_\tau(i)}\|u_i - u_j\|^2 + \sum_{i\in Y_\Lambda} V(u_i).$$
By the Lipschitz property of the address map for Delone sets of finite local complexity, the interaction between particles is restricted to a finite neighborhood of each particle defined by $N_\tau$. By the discussion in \S \ref{subsec:H2} this type of interaction satisfies \ref{H:C2}. Thus, if $V$ satisfies \ref{H:non-deg}, by magnifying or scaling the potential, by the main theorems of this paper, equilibrium configurations exist.
\begin{ack}
I first learned about the Frenkel-Kontorova model from Rafael de la Llave. He also suggested that an approach using the anti-integrable limit may be fruitful. I am grateful to him for many conversations about the Frenkel-Kontorova, math, and life in general over many years. I am also grateful to two anonymous referees who have helped make the paper and its exposition much better.
\end{ack}
\section{Delone sets}
\label{sec:Delone}
Delone sets can be translated: for any $t\in\mathbb{R}^d$ the Delone set $\varphi_t(\Lambda) = \Lambda + t$ is the Delone set obtained by translating the Delone set $\Lambda$ by the vector $t\in\mathbb{R}^d$.

A finite subset $C\subset \Lambda$ of Delone set represents a \textbf{cluster}. A Delone set is \textbf{repetitive} if for any cluster $C\subset \Lambda$, there exists an $R>0$ such that any ball of radius $R$ centered anywhere in $\mathbb{R}^d$ contains a copy of $C$ in it.

A Delone set $\Lambda$ has \textbf{finite local complexity} if for any $R>0$ there are only finitely many clusters (up to translation) found in $B_R(y)\cap \Lambda$ for any $y\in\mathbb{R}^d$. All the Delone sets considered in this paper will be assumed to be repetitive and have finite local complexity.

One can impose a metric on the set of all translates of $\Lambda$:
$$d(\Lambda_1, \Lambda_2) = \min\left\{2^{-1/2}, \inf_{\varepsilon>0}\{B_{\varepsilon^{-1}} \cap \varphi_x(\Lambda_1)= B_{\varepsilon^{-1}} \cap \varphi_x(\Lambda_2)\mbox{ for some }x,y\in B_\varepsilon(\bar{0})  \} \right \}.$$
The \textbf{pattern space} of $\Lambda$ (sometimes also called the \emph{hull}), denoted by $\Omega_\Lambda$ is the closure of the set of all translations with respect to the metric above:
$$\Omega_\Lambda  = \overline{\{\varphi_t(\Lambda): t\in\mathbb{R}^d\}}.$$
Whenever $\Lambda$ has finite local complexity the pattern space $\Omega_\Lambda$ is a compact metric space. In most cases of interest, the translation action given by $\mathbb{R}^d$ on $\Omega_\Lambda$ is minimal and uniquely ergodic. Here I will only assume that the action is minimal, which implies repetitivity of the Delone set. 
\begin{definition}
A Delone set $\Lambda$ is \textbf{self-affine} if there exists a homeomorphism of the pattern space $\Phi_A:\Omega_\Lambda\rightarrow\Omega_\Lambda$ such that the conjugacy
\begin{equation}
  \label{eqn:conjugacy}
\Phi_A\circ\varphi_t = \varphi_{At}\circ\Phi_A
\end{equation}
holds for some expanding matrix $A\in GL^+(d,\mathbb{R})$, where $\Omega_\Lambda$ is the pattern space of $\Lambda$ and $\varphi_t$ denotes the translation action on the pattern space $\Omega_\Lambda$.
\end{definition}
The matrix $A$ is called the \textbf{self-affinity matrix} of $\Lambda$ if $\Lambda$ is self-affine and $A$ satisfies the conjugacy (\ref{eqn:conjugacy}). Its eigenvalues are listed by magnitude: $|\lambda_1|\geq |\lambda_2|\geq \cdots \geq |\lambda_d| \geq 1$.
\subsection{Pattern equivariant functions}
Let $\Lambda\subset \mathbb{R}^d$ be a Delone set of finite local complexity. For a finite dimensional vector space $V$ and Delone set $\Lambda$, the set $C^r_\Lambda(\mathbb{R}^d;V)$ denotes the space of $C^r$ $\Lambda$-equivariant functions with values in $V$ and will denote by $C^r_\Lambda$ the space of $\Lambda$-equivariant functions $f\in C^r(\mathbb{R}^d;\mathbb{R})$. For a self-affine Delone set $\Lambda$ with self-affinity matrix $A$, if $f$ is a continuous $\Lambda$-equivariant function, so is $A^* f$ (see \cite[\S 4]{ST:SS}). For $f\in C^3_\Lambda$ its Hessian matrix at the point $z$ is denoted by $\mathbf{H}(f)(z)$.

If $f$ is a $\Lambda$-equivariant function, then there exists a continuous function $h_f\in C(\Omega_\Lambda)$ such that
$$f(t) = h_f\circ \varphi_t(\Lambda).$$
If $\Lambda$ has finite local complexity, then $\Omega_\Lambda$ is compact, and so $h_f$ is bounded \cite[Proposition 22]{Kellendonk-Putnam:RS}.

\section{The anti-integrable limit}
\label{sec:anti}
This section briefly describes the idea behind the anti-integrable limit and introduces some notation which will be used later. The reader is refered to \cite{BT:anti} for a thorough introduction to the anti-integrable limit.

Let $\{H_B\}_{B\subset\bar{\mathcal{S}}}$ be an interaction with some pontential $V$ and recall that $\{H_B\}_{\bar{\mathcal{S}}}$ represents the part of the interaction which does not involve $V$. Denote by $n_i$ the number of other indices in the interaction $\mathcal{S}_i$ which are not $i$, that is, the number of other particles with which the particle with index $i\in\mathbb{Z}^r$ interacts. These indices will be denoted by $j_1,\dots, j_{n_i}$. As such, for an interaction $\{H_B\}_{B\subset \mathcal{S}}$ and any $i\in \mathbb{Z}^r$, let
\begin{equation}
  \label{eqn:Q}
\mathcal{Q}_{i} := \nabla_{u_i}  \sum_{B\subset\mathcal{S}_i} H_B:\mathbb{R}^{d(n_i+1)}\longrightarrow \mathbb{R}^d
\end{equation}
be the function that only depends on the $d(n_i+1)$ variables $u_{i}, u_{j_1},\dots, u_{j_{n_i}}\in \mathbb{R}^d$ obtained by taking $d$ partial derivatives with respect to the coordinate elements of $u_i$. As such, with a slight abuse on notation, if $u$ is a configuration then by $\mathcal{Q}_{i}(u)$ it is meant that $\mathcal{Q}_{i}$ only depends on the elements $u_{i}, u_{j_1},\dots, u_{j_{n_i}}$ of the configuration $u$.

For a $\Lambda$-equivariant function $V\in C_\Lambda^r$, $r>1$, let
$$\mathcal{Z}_V := \{x\in\mathbb{R}^d : \nabla V(x) = \bar{0}\mbox{ and }\mathrm{det}\,\mathbf{H}(V)(z)\neq 0\}.$$
Definition \ref{def:non-deg} requires non-degenerate functions to have only one non-degenerate point. However, if the function is $\Lambda$-equivariant for some repetitive $\Lambda$, then by repetitivity there exist infinitely many non-degenerate points.
\begin{lemma}
  \label{lem:repZero}
  Let $\Lambda$ be a repetitive Delone set and $f\in C^3_\Lambda$ be non-degenerate. Then the set $\mathcal{Z}_f$ is a relatively dense set.
\end{lemma}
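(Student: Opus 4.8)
The plan is to exploit the connection between $\Lambda$-equivariant functions and the pattern space $\Omega_\Lambda$, together with repetitivity, to upgrade the existence of a single non-degenerate critical point into relative denseness of $\mathcal{Z}_f$. First I would translate the hypothesis into the language of $\Omega_\Lambda$: since $f\in C^3_\Lambda$, there is a continuous function $h_f\in C(\Omega_\Lambda)$ with $f(t)=h_f(\varphi_t(\Lambda))$, and in fact, because $f$ is $C^3$, the gradient $\nabla f$ and Hessian $\mathbf{H}(f)$ are themselves $\Lambda$-equivariant (with possibly larger range), so they factor through $\Omega_\Lambda$ as well. The key consequence is that whether a point $x\in\mathbb{R}^d$ lies in $\mathcal{Z}_f$ depends only on the pattern $\varphi_x(\Lambda)\cap B_{R'}$ for a fixed radius $R'$ (the maximum of the equivariance ranges of $f$, $\nabla f$, $\mathbf{H}(f)$). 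In other words, membership in $\mathcal{Z}_f$ is a local, pattern-determined condition.

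Next, by hypothesis \ref{H:non-deg} there is at least one point $z\in\mathbb{R}^d$ with $\nabla f(z)=\bar 0$ and $\det\mathbf{H}(f)(z)\neq 0$, i.e. $\mathcal{Z}_f\neq\emptyset$; fix such a $z$. Consider the finite cluster $C = \varphi_z(\Lambda)\cap \overline{B_{R'}(\bar 0)}$, which is a cluster of $\Lambda$ translated so that the relevant local pattern sits around the origin — equivalently, $C$ records the pattern of $\Lambda$ seen from $z$ out to radius $R'$. By repetitivity of $\Lambda$, there exists $R>0$ such that every ball of radius $R$ in $\mathbb{R}^d$ contains a translated copy of this cluster $C$. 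Concretely, for every $y\in\mathbb{R}^d$ there is a point $x\in B_R(y)$ such that the pattern of $\Lambda$ around $x$ out to radius $R'$ agrees with the pattern of $\Lambda$ around $z$ out to radius $R'$, that is $\varphi_x(\Lambda)\cap \overline{B_{R'}(\bar 0)} = \varphi_z(\Lambda)\cap\overline{B_{R'}(\bar 0)}$. Here I would be slightly careful to phrase repetitivity so that it gives occurrences of the pattern-as-seen-from-$z$ rather than merely occurrences of the underlying point configuration; this is the standard equivalent formulation of repetitivity for finite local complexity Delone sets, but it is the one point that needs to be stated precisely.

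By the equivariance of $f$, $\nabla f$, and $\mathbf{H}(f)$ with common range $R'$, the agreement of the local patterns at $x$ and $z$ forces $\nabla f(x)=\nabla f(z)=\bar 0$ and $\mathbf{H}(f)(x)=\mathbf{H}(f)(z)$, so $\det\mathbf{H}(f)(x)\neq 0$ and hence $x\in\mathcal{Z}_f$. Thus every ball of radius $R$ contains a point of $\mathcal{Z}_f$, which is exactly the statement that $\mathcal{Z}_f$ is relatively dense. The main obstacle — really the only subtlety — is the bookkeeping of equivariance ranges: one must make sure that a single radius $R'$ works simultaneously for $f$, for each component of $\nabla f$, and for each entry of $\mathbf{H}(f)$, and that the notion of ``same pattern around $x$ as around $z$'' used in the repetitivity step is the same notion that triggers the equivariance conclusions. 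Taking $R'$ to be the maximum of finitely many equivariance ranges handles this cleanly.
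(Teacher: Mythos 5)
Your argument is correct and is precisely the one the paper leaves implicit (the paper states Lemma \ref{lem:repZero} without a written proof, relying on the remark that repetitivity propagates the single non-degenerate critical point): pattern-equivariance of $f$, $\nabla f$, and $\mathbf{H}(f)$ with a common range makes membership in $\mathcal{Z}_f$ a locally pattern-determined condition, and repetitivity then makes the occurrences of that pattern relatively dense. Your explicit flagging of the exact-patch versus cluster-copy formulation of repetitivity is the right point of care, and the argument goes through as you describe.
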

\subsection{The anti-integrable limit for (\ref{eqn:main1})}
\label{subsec:anti1}
Recall that equilibrium configurations for $\mathcal{A}^\lambda_V$ are maps $u:\mathbb{Z}^r\rightarrow \mathbb{R}^d$ satisfying
$$\nabla_{u_i}\mathcal{A}^\lambda_V(u) = \mathcal{Q}_i(u) + \lambda\cdot \nabla V(u_i)  = 0,$$
for all $i\in \mathbb{Z}^r$. This is equivalent to
\begin{equation}
\label{eqn:condition}
\nabla V(u_i) = - \frac{\mathcal{Q}_i(u)}{\lambda}.
\end{equation}
Note that as $\lambda\rightarrow \infty$, the right side of (\ref{eqn:condition}) goes to zero. The idea of the anti-integrable limit is, first, noting that the ``equilibrium configuration'' for $\lambda = \infty$ consists of configurations taking values in the zero set of $\nabla V$. Thus, for $\lambda$ large enough, one should be able to find equilibrium configurations for $\mathcal{A}_V^\lambda$ by looking close to the zero set of $\nabla V$. Moreover, since the set $\mathcal{Z}_V$ is relatively dense (Lemma \ref{lem:repZero}), given any $h$ one can place any element $u_i$ of a configuration at a uniformly bounded distance from $h(i)$ in order to satisfy (\ref{eqn:type}).
\subsection{The anti-integrable limit for (\ref{eqn:main2})}
\label{subsec:anti2}
Let $\Lambda$ be a self-affine Delone set and let $A\in GL^+(d,\mathbb{R})$ be its affinity matrix. The idea to show that there exists an $n\in\mathbb{N}\cup\{0\}$ such that there exist equilibrium configurations for the action
$$ \mathcal{A}_V^n(u) = \sum_{B\in\mathcal{S}} H_B(u) + \sum_{i\in\mathbb{Z}^r} V(A^nu_i),$$
is very similar to the idea described in \S \ref{subsec:anti1}. In other words, equilibrium configurations for $\mathcal{A}^n_V$ are maps $u:\mathbb{Z}^r\rightarrow \mathbb{R}^d$ satisfying
$$\nabla_{u_i}\mathcal{A}^\lambda_V(u) = \mathcal{Q}_i(u) +  \nabla V(A^n u_i)\cdot A^n  = 0,$$
for all $i\in \mathbb{Z}^r$. This is equivalent to
\begin{equation}
\label{eqn:condition2}
\nabla V(A^nu_i) = - \mathcal{Q}_i(u)\cdot A^{-n}.
\end{equation}
As $n\rightarrow \infty$, the right side of (\ref{eqn:condition2}) goes to zero. Again, the idea of the anti-integrable limit is noting that the ``equilibrium configuration'' for $n = \infty$ consists of configurations taking values in ``the zero set of $(A^\infty)^*\nabla V$'', which is it not really well-defined. However, for $n$ large enough, one should be able to find equilibrium configurations for $\mathcal{A}_V^n$ by looking close to the zero set of $(A^n)^*\nabla V$. Again, since the set $\mathcal{Z}_V$ is relatively dense (Lemma \ref{lem:repZero}), given any $h$ one can place any element $u_i$ of a configuration at a uniformly bounded distance from $h(i)$ in order to satisfy (\ref{eqn:type}).
\section{Spaces of sequences and proof of Theorem \ref{thm:main}}
\label{sec:spaces}
Let
$$\mathcal{W}_{V} := \left\{ \mathbf{a}\in \mathcal{C}_{\mathbb{Z}^r, d} : \mathbf{a}_i\in \mathcal{Z}_V  \mbox{ for all }i\in\mathbb{Z}^r \right\}$$
and for any $h:\mathbb{Z}^r\rightarrow \mathbb{R}^d$, let $\mathcal{W}_{V}^h := \mathcal{W}_{V} \cap \mathcal{C}_{\mathbb{Z}^r, d}^h$. Finally, for $R\geq 0$, define $\mathcal{W}_{V}^{h,R} := \mathcal{W}_{V} \cap \mathcal{C}_{\mathbb{Z}^r, d}^{h,R}$. The sets $\mathcal{W}_{V}$ serve as sets which will ``code'' equilibrium configurations. In other words, given $\mathbf{a}\in\mathcal{W}_V$, it will be shown that, for $\lambda$ large enough, one can find equilibrium configurations $u\in \mathcal{C}_{\mathbb{Z}^r,d}$ such that $u_i$ and $\mathbf{a}_i$ are uniformly close. As such, the configuration $\mathbf{a}\in\mathcal{W}_V$ codes the equilibrium configuration $u$. Likewise, for any $h:\mathbb{Z}^r\rightarrow \mathbb{R}^d$, the sets $\mathcal{W}_V^h$ will code equilibrium configurations of type $h$.
\begin{lemma}
\label{lem:nonempty1}
  Let $\Lambda\subset\mathbb{R}^d$ be a repetitive Delone set, $r\in\mathbb{N}$, $V\in C^3_\Lambda$ a non-degenerate, $\Lambda$-equivariant function, and $h:\mathbb{Z}^r\rightarrow \mathbb{R}^d$. Then there exists a $R(V)$ such that for all $R>R(V)$, $\mathcal{W}_{V}^{h,R}\neq \varnothing$.
\end{lemma}
\begin{proof}
  By Lemma \ref{lem:repZero} there exists a $r_Z>0$ such that for any $x\in\mathbb{R}^d$, it follows that $B_{r_Z}(x)\cap \mathcal{Z}_V\neq \varnothing$. Thus, for any $i\in\mathbb{Z}^r$, there exists a $z_i\in B_{r_Z}(h(i))$ such that $\nabla V(z_i) = \bar{0}$ and $\|z_i - h(i)\| < 2r_Z$, so $R(V) = 2r_Z$.
\end{proof}
For $\eta>0$, recalling (\ref{eqn:Q}), define
\begin{equation}
\label{eqn:Weta}
  \mathcal{W}_{V}^\eta := \left\{ \mathbf{a}\in\mathcal{W}_{V} : \left\|  \mathcal{Q}_i(\mathbf{a})   \right\|  \leq \eta \mbox{ for all }i \in\mathbb{Z}^r\right\},
\end{equation}
$\mathcal{W}_{V}^{\eta,h}:= \mathcal{W}_{V}^\eta\cap\mathcal{W}_{V}^h$ and, for any $R>0$, $\mathcal{W}_{V}^{\eta,h,R}:= \mathcal{W}_{V}^\eta\cap\mathcal{W}_{V}^{h,R}.$ 
\begin{lemma}
  \label{lem:etaLarge}
Let $\Lambda\subset\mathbb{R}^d$ be a repetitive Delone set, $V\in C^3_\Lambda$ a non-degenerate, $\Lambda$-equivariant function, $h:\mathbb{Z}^r\rightarrow \mathbb{R}^d$, and $\{H_B\}$ an interaction satisfying \ref{H:C2}. Then there exists an $\eta_*>0$ and a $R_2>0$ such that for any $\eta>\eta_*$ and $R>R_2$, $\mathcal{W}_{V}^{\eta,h,R} \neq \varnothing$.
\end{lemma}
\begin{proof}
By Lemma \ref{lem:nonempty1} there exists a $R_*>0$ such that for all $R>R_*$, $\mathcal{W}_{V}^{h,R}\neq\varnothing$. Since the interaction $\{H_B\}$ satisfies \ref{H:C2}, by (\ref{eqn:C2}), the result follows with $\eta_* = B_{R,h, \{H_B\}}$ and $R_2 = R_*$.
\end{proof}

For $\eta, R$ large enough (Lemma \ref{lem:etaLarge}), $h:\mathbb{Z}^r\rightarrow\mathbb{R}^d$, $\mathbf{a}\in\mathcal{W}_{V}^{\eta,h,R}$, and $\varepsilon>0$, let
\begin{equation}
  \label{eqn:space}
  \Pi^{\varepsilon,h}_{V,\eta}(\mathbf{a}):= \left\{u\in\mathcal{C}_{\mathbb{Z}^r,d}: \|u_i - \mathbf{a}_i\|\leq \varepsilon \mbox{ for all }i \in\mathbb{Z}^r\right\}.
\end{equation}
Note that $\Pi^{\varepsilon,h}_{V,\eta}(\mathbf{a})\subset \mathcal{C}_{\mathbb{Z}^r,d}^h$. The set $\Pi^{\varepsilon,h}_{V,\eta}(\mathbf{a})$ is endowed with the metric
\begin{equation}
\label{eqn:metric}
\delta(u,u') = \sup_{i\in\mathbb{Z}^r}\|u_i - u'_i\|.
\end{equation}

For a non-degenerate potential $V\in C^3_\Lambda$, by the repetitivity and finite local complexity of $\Lambda$, there exists a $R_V>0$ such that for any $z\in \mathcal{Z}_V$, by the inverse function theorem, there is a local homeomorphism
\begin{equation}
  \label{eqn:locHomeo}
  \mathcal{K}_z : \overline{B_{R_V}(\bar{0})}\rightarrow U_z\subset \mathbb{R}^d
\end{equation}
which is a local inverse for $\nabla V$. That is, the map $\mathcal{K}_z$ satisfies
$\nabla V\circ \mathcal{K}_z(x) = x$ for any $x\in \overline{B_{R_V}(\bar{0})}$ and $\mathcal{K}_z(\bar{0}) = z$. Again, by finite local complexity, there are uniform bounds on the derivatives of the maps $\mathcal{K}_z$. In other words, writing $\mathcal{K}_z  = (\mathcal{K}_z^1,\dots, \mathcal{K}_z^d)$, there exists a $K_V$ such that for any $z\in \mathcal{Z}_V$ and $x\in \overline{B_{R_V}(\bar{0})}$,
\begin{equation}
  \label{eqn:DKbnd}
  \|\nabla \mathcal{K}_z^\ell(x)\| < K_V
\end{equation}
for all $\ell = 1,\dots, d$.
\begin{lemma}
  \label{lem:map}
  Under the hypotheses of Lemma \ref{lem:etaLarge}, there exists an $\lambda_*>0$ and $\varepsilon'>0$ such that for any $\lambda>\lambda_*$ and $\mathbf{a}\in\mathcal{W}_{V}^{\eta,h,R}$, the map $\Phi_{V,\mathbf{a},\eta}^{\varepsilon,h,\lambda}: \Pi^{\varepsilon,h}_{V,\eta}(\mathbf{a})\rightarrow \Pi^{\varepsilon',h}_{V,\eta}(\mathbf{a})$ given by
\begin{equation}
\label{eqn:map}
  (\Phi_{V,\mathbf{a},\eta}^{\varepsilon,h,\lambda}(u))_i = \mathcal{K}_{\mathbf{a}_i}\left( - \frac{\mathcal{Q}_i(u)}{\lambda}\right),
\end{equation}
 $i\in\mathbb{Z}^r$, is well-defined, where $\eta, R$ are chosen large enough as guaranteed by Lemma \ref{lem:etaLarge}.
\end{lemma}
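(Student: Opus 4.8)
The plan is to verify that the formula (\ref{eqn:map}) genuinely lands in a set of the claimed form, i.e.\ that the image stays within a controlled $\varepsilon'$-ball of $\mathbf{a}$, by choosing $\lambda_*$ and $\varepsilon'$ appropriately. First I would fix $\sigma$, and let $\eta>\eta_*$ and $R>R_2$ be as in Lemma~\ref{lem:etaLarge}, so that $\mathcal{W}_V^{\eta,\sigma(R)}\neq\varnothing$ and we may fix $\mathbf{a}$ in it. Fix also an $\varepsilon>0$ with $\varepsilon \le R$, so that $\Pi^{\varepsilon,\sigma}_{V,\eta}(\mathbf{a})\subset \mathcal{C}_{\mathbb{Z}^r,d}^{\sigma(R+\varepsilon)}$ — this inclusion is what lets us invoke \ref{H:C2} with the radius $R+\varepsilon$ and obtain a uniform bound $B = B_{R+\varepsilon,\sigma,\{H_B\}}$ with $\sup_{u\in\Pi^{\varepsilon,\sigma}_{V,\eta}(\mathbf{a})}\sup_{i}\|\mathcal{Q}_i(u)\| \le B$, since $\mathcal{Q}_i = \nabla_{u_i}\sum_{B\in\mathcal{S}_i}H_B$.

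The core estimate is then straightforward. For $u\in \Pi^{\varepsilon,\sigma}_{V,\eta}(\mathbf{a})$ and any $i\in\mathbb{Z}^r$, set $x_i = -\mathcal{Q}_i(u)/\lambda$; then $\|x_i\| \le B/\lambda$. Choosing $\lambda_* \ge B/R_V$ guarantees $x_i \in \overline{B_{R_V}(\bar 0)}$ for all $\lambda>\lambda_*$ and all $i$, so $\mathcal{K}_{\mathbf{a}_i}(x_i)$ is defined; this is where the uniform inverse-function-theorem radius $R_V$ (available by finite local complexity, as in (\ref{eqn:locHomeo})) is used. Next, since $\mathcal{K}_{\mathbf{a}_i}(\bar 0) = \mathbf{a}_i$ and $\|D\mathcal{K}_z\|_{op} < K_V$ uniformly by (\ref{eqn:DKbnd}), the mean value inequality gives
\begin{equation*}
\|(\Phi^{\varepsilon,\sigma,\lambda}_{V,\mathbf{a},\eta}(u))_i - \mathbf{a}_i\| = \|\mathcal{K}_{\mathbf{a}_i}(x_i) - \mathcal{K}_{\mathbf{a}_i}(\bar 0)\| \le K_V\|x_i\| \le \frac{K_V B}{\lambda}.
\end{equation*}
Setting $\varepsilon' = K_V B/\lambda_*$ (or, for the precise statement, $\varepsilon'$ any fixed number with $\varepsilon' \ge K_V B/\lambda$ for all $\lambda>\lambda_*$, e.g.\ by taking $\lambda_*$ large enough that $K_V B/\lambda_* \le \varepsilon'$ for a preassigned small $\varepsilon'$) shows $\sup_i \|(\Phi^{\varepsilon,\sigma,\lambda}_{V,\mathbf{a},\eta}(u))_i - \mathbf{a}_i\| \le \varepsilon'$, which is exactly membership in $\Pi^{\varepsilon',\sigma}_{V,\eta}(\mathbf{a})$ by the definition (\ref{eqn:space}). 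One should also remark that the right-hand side depends on $u$ only through $\mathcal{Q}_i(u)$, which in turn depends only on finitely many coordinates $u_i,u_{j_1},\dots,u_{j_{n_i}}$, so each coordinate of $\Phi$ is a well-defined continuous function of $u$; combined with the uniform bound this makes $\Phi$ a genuine map between the stated spaces.

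The only real subtlety — and the step I expect to require the most care in the writeup rather than any deep idea — is bookkeeping the radii: one wants $\varepsilon'$ to be taken \emph{smaller} than $\varepsilon$ (or at least no larger than $R_V$, and small enough that $\Pi^{\varepsilon',\sigma}_{V,\eta}(\mathbf{a})$ sits inside the domain on which $\mathcal{Q}$ is controlled), so that $\Phi$ can be iterated and, in the subsequent lemma, shown to be a contraction. This is arranged by first fixing $\varepsilon'$ (as small as one likes, at most $\min\{R_V,\varepsilon\}$), then choosing $\lambda_* > K_V B/\varepsilon'$; the verification that $x_i\in\overline{B_{R_V}(\bar 0)}$ then follows a fortiori from $K_V\ge 1$ or, if not, from enlarging $\lambda_*$ to also exceed $B/R_V$. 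No use of \ref{H:non-deg} beyond what is already packaged into the existence of $\mathcal{Z}_V$, the maps $\mathcal{K}_z$, and the bounds $R_V,K_V$ is needed at this stage; non-degeneracy does the real work later, in showing the fixed point is unique and that the derivative bound makes $\Phi$ a contraction.
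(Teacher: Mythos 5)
Your proposal is correct and follows essentially the same route as the paper: use \ref{H:C2} to bound $\|\mathcal{Q}_i(u)\|$ uniformly, take $\lambda_*$ large enough that $-\mathcal{Q}_i(u)/\lambda$ lands in $\overline{B_{R_V}(\bar 0)}$, and then bound the distance of the image from $\mathbf{a}_i$ to produce $\varepsilon'$. The only differences are minor refinements on your part --- you correctly note that the relevant constant is $B_{R+\varepsilon,\sigma,\{H_B\}}$ since $\Pi^{\varepsilon,\sigma}_{V,\eta}(\mathbf{a})\subset\mathcal{C}^{\sigma(R+\varepsilon)}_{r,d}$, and your $\varepsilon'=K_VB/\lambda$ from the mean value inequality is sharper than the paper's crude choice $\max_{z,x}\|z-\mathcal{K}_z(x)\|$ --- neither of which changes the argument.
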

\begin{proof}
  By the condition \ref{H:C2}, for $\lambda> B_{R,h,\{H_B\}}/R_V$, for $\mathbf{a}\in\mathcal{W}_V^{\eta,h,R}$ and $u\in \Pi_{V,\eta}^{\varepsilon,h}(\mathbf{a})$, we have that $R_i(\lambda):=\|\mathcal{Q}_i(u)\|/\lambda<R_V$ for any $i\in\mathbb{Z}^r$, so $-\mathcal{Q}_i(u)/\lambda$ is in the domain of the map $\mathcal{K}_{\mathbf{a}_i}$. Therefore the map is well defined. The value $\varepsilon'$ can be defined as
  $$\varepsilon' := \max_{z\in\mathcal{Z}_V} \max_{x\in \overline{B_{ R_i(\lambda)}(\bar{0})}}\| z - \mathcal{K}_z(x) \|$$
which, by finite local complexity and \ref{H:non-deg}, is well defined.
\end{proof}

\begin{proposition}
  \label{prop:contract}
    Under the hypotheses of Lemma \ref{lem:etaLarge}, there exists an $\lambda_*>0$ and $\varepsilon_*>0$ such that for any $\lambda>\lambda_*$, $\varepsilon<\varepsilon_*$ and $\mathbf{a}\in\mathcal{W}_{V}^{\eta,h,R}$, the map $\Phi_{V,\mathbf{a},\eta}^{\varepsilon,h,\lambda}$ defined by (\ref{eqn:map}) is a map from $\Pi^{\varepsilon,h}_{V,\eta}(\mathbf{a})$ into itself and is a contraction, where $\eta, R$ are chosen large enough as guaranteed by Lemma \ref{lem:etaLarge}.
\end{proposition}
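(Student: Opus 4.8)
The plan is to show that $\Phi \equiv \Phi_{V,\mathbf{a},\eta}^{\varepsilon,\sigma,\lambda}$ is a contraction on the complete metric space $(\Pi^{\varepsilon,\sigma}_{V,\eta}(\mathbf{a}),\delta)$ for suitable parameters, and then to verify that it maps this space into itself (not merely into the a priori larger space $\Pi^{\varepsilon',\sigma}_{V,\eta}(\mathbf{a})$ produced in Lemma~\ref{lem:map}). First I would estimate the Lipschitz constant. Fix $u,u'\in \Pi^{\varepsilon,\sigma}_{V,\eta}(\mathbf{a})$ and $i\in\mathbb{Z}^r$. By definition,
$$
(\Phi(u))_i - (\Phi(u'))_i = \mathcal{K}_{\mathbf{a}_i}\!\left(-\tfrac{\mathcal{Q}_i(u)}{\lambda}\right) - \mathcal{K}_{\mathbf{a}_i}\!\left(-\tfrac{\mathcal{Q}_i(u')}{\lambda}\right).
$$
Using the bound (\ref{eqn:DKbnd}) on $\|D\mathcal{K}_z\|_{op} < K_V$, uniform over $z\in\mathcal{Z}_V$ by finite local complexity, the mean value inequality gives
$$
\|(\Phi(u))_i - (\Phi(u'))_i\| \leq \frac{K_V}{\lambda}\,\|\mathcal{Q}_i(u) - \mathcal{Q}_i(u')\|.
$$
Now $\mathcal{Q}_i$ depends only on the finitely many coordinates $u_i, u_{j_1},\dots,u_{j_{n_i}}$, and by \ref{H:C2} the Hessian of $\sum_{B\in\mathcal{S}_i} H_B$ is bounded in operator norm by $B_{R,\sigma,\{H_B\}}$ on $\mathcal{C}_{r,d}^{\sigma(R)}$ (which contains $\Pi^{\varepsilon,\sigma}_{V,\eta}(\mathbf{a})$ once $\varepsilon$ is small relative to $R$), so $\mathcal{Q}_i$ is Lipschitz in these coordinates with constant $B_{R,\sigma,\{H_B\}}$; hence $\|\mathcal{Q}_i(u)-\mathcal{Q}_i(u')\| \leq C\, B_{R,\sigma,\{H_B\}}\,\delta(u,u')$ for a combinatorial constant $C$ depending on $n_i$, which is itself uniformly bounded. (Here the only subtlety is that the number $n_i$ of interacting neighbors and the way $\|\cdot\|_{op}$ splits across the block coordinates must be controlled uniformly in $i$; finite local complexity together with the structure of the interaction handles this.) Therefore $\delta(\Phi(u),\Phi(u')) \leq \frac{K_V C\, B_{R,\sigma,\{H_B\}}}{\lambda}\,\delta(u,u')$, and choosing $\lambda_* > 2 K_V C\, B_{R,\sigma,\{H_B\}}$ makes this a contraction with factor $\le 1/2$ for all $\lambda>\lambda_*$.

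Next I would verify self-mapping. From the contraction estimate applied with $u' = \mathbf{a}$ — noting $(\Phi(\mathbf{a}))_i = \mathcal{K}_{\mathbf{a}_i}(-\mathcal{Q}_i(\mathbf{a})/\lambda)$ and $\|\mathcal{Q}_i(\mathbf{a})\|\le\eta$ since $\mathbf{a}\in\mathcal{W}_V^{\eta,\sigma(R)}$ — one gets
$$
\|(\Phi(u))_i - \mathbf{a}_i\| \leq \|(\Phi(u))_i - (\Phi(\mathbf{a}))_i\| + \|(\Phi(\mathbf{a}))_i - \mathbf{a}_i\| \leq \tfrac{1}{2}\varepsilon + \frac{K_V}{\lambda}\,\eta,
$$
using again (\ref{eqn:DKbnd}) and $\mathcal{K}_{\mathbf{a}_i}(\bar 0) = \mathbf{a}_i$ for the second term. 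So I would set $\varepsilon_*$ to be whatever is forced by Lemma~\ref{lem:map} and the requirement $\Pi^{\varepsilon,\sigma}_{V,\eta}(\mathbf{a})\subset\mathcal{C}_{r,d}^{\sigma(R)}$, and then, for $\varepsilon<\varepsilon_*$, further enlarge $\lambda_*$ if necessary so that $\frac{K_V\eta}{\lambda} \leq \frac{\varepsilon}{2}$ for all $\lambda>\lambda_*$; then $\|(\Phi(u))_i - \mathbf{a}_i\|\le\varepsilon$ for all $i$, i.e. $\Phi(u)\in\Pi^{\varepsilon,\sigma}_{V,\eta}(\mathbf{a})$. Note this last step requires $\eta$ to be \emph{fixed} first (at the value $\eta_*$ from Lemma~\ref{lem:etaLarge}) and $\lambda_*$ to depend on it; the order of quantifiers matters here.

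The main obstacle I expect is the bookkeeping of uniformity in $i\in\mathbb{Z}^r$: one must know that the Lipschitz constant of $\mathcal{Q}_i$, the neighbor-count $n_i$, the bounds $K_V$ and $R_V$ on the local inverses $\mathcal{K}_z$, and the constant $B_{R,\sigma,\{H_B\}}$ are all independent of $i$ (and of the particular $\mathbf{a}$ and $z$ involved). Assumption \ref{H:C2} is phrased precisely to supply this for the interaction part, and finite local complexity of $\Lambda$ supplies it for $\mathcal{K}_z$ and the geometry; so the real work is just assembling these uniform bounds correctly and tracking the order in which $\eta$, $R$, $\varepsilon$, and $\lambda$ are chosen so that no circularity arises. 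Once the contraction-into-itself is established, the Banach fixed point theorem on the complete metric space $(\Pi^{\varepsilon,\sigma}_{V,\eta}(\mathbf{a}),\delta)$ will give the unique fixed point, which by construction solves $\nabla V(u_i) = -\mathcal{Q}_i(u)/\lambda$, i.e. (\ref{eqn:condition}), completing the proof of Theorem~\ref{thm:main}; but that is for the next proposition, not this one.
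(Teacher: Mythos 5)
Your proposal is correct, and its core estimate --- bounding $\|(\Phi(u))_i-(\Phi(u'))_i\|$ by $(K_V/\lambda)\,\|\mathcal{Q}_i(u)-\mathcal{Q}_i(u')\|$ via the derivative bound (\ref{eqn:DKbnd}) and then controlling $\mathcal{Q}_i$ through \ref{H:C2} --- is the same mechanism the paper uses; the only cosmetic difference is that the paper expands $\mathcal{K}_{\mathbf{a}_i}$ in a Taylor series and carries an $O(\|u_i-u'_i\|^2)$ remainder (which is why it needs $\varepsilon$ small for the contraction), whereas your mean-value-inequality version yields the Lipschitz bound directly. The substantive difference is that you also prove the ``into itself'' half of the statement: the paper's proof establishes only the contraction estimate and never checks that $\Phi$ maps $\Pi^{\varepsilon,\sigma}_{V,\eta}(\mathbf{a})$ back into itself (Lemma \ref{lem:map} only lands it in $\Pi^{\varepsilon',\sigma}_{V,\eta}(\mathbf{a})$ for a possibly larger $\varepsilon'$). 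Your triangle-inequality argument through $\Phi(\mathbf{a})$, using $\|\mathcal{Q}_i(\mathbf{a})\|\le\eta$ and $\mathcal{K}_{\mathbf{a}_i}(\bar 0)=\mathbf{a}_i$, fills exactly this omission, and it correctly exposes the price: the self-mapping forces $\lambda\ge 2K_V\eta/\varepsilon$, so $\lambda_*$ must depend on $\varepsilon$ (or $\varepsilon_*$ on $\lambda$), a coupling between the two thresholds that the proposition's literal ``there exist $\lambda_*,\varepsilon_*$ such that for all $\lambda>\lambda_*$, $\varepsilon<\varepsilon_*$'' phrasing does not advertise; your remark that the order of quantifiers matters is exactly right. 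The subtlety you flag about how the operator norm of the Hessian distributes over the $n_i+1$ block coordinates (producing your combinatorial constant $C$, and requiring uniform control of $n_i$) is also real: the paper writes its bound directly in terms of $\|u_i-u'_i\|$ even though $\mathcal{Q}_i(u)-\mathcal{Q}_i(u')$ depends on all neighboring coordinates, so both treatments ultimately lean on \ref{H:C2} and finite local complexity to supply uniformity in $i$; yours at least names the issue.
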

\begin{proof}
Let $i\in \mathbb{Z}^r$. By expanding the map $(\Phi_{V,\mathbf{a},\eta}^{\varepsilon,h,\lambda}(u))_i$ in Taylor series, for two configurations $u,u'\in \Pi_{V,\eta}^{\varepsilon,h}(\mathbf{a})$ it follows that
\begin{equation}
\label{eqn:contract}
\begin{split}
  \|(\Phi_{V,\mathbf{a},\eta}^{\varepsilon,h,\lambda}(u))_i &- (\Phi_{V,\mathbf{a},\eta}^{\varepsilon,h,\lambda}(u'))_i\|^2 \\
  &\leq \|u_i-u'_i\|^2\max_\ell\frac{d^2\|\nabla \mathcal{K}_{\mathbf{a}_i}^\ell\|^2_1}{\lambda^2} \cdot \left\| \mathbf{H}\left(\sum_{B\in\mathcal{S}_i}H_B \right)(u) \right\|_{1,1}^2  + O(\|u_i-u'_i\|^4)\\
\mbox{(by (\ref{eqn:DKbnd}) and \ref{H:C2}) }  &\leq  \|u_i-u'_i\|^2\frac{K_V^2}{\lambda^2} B_{R,h, \{H_B\}}^2 + O(\|u_i-u'_i\|^4).
 \end{split}
\end{equation}
Since $\Lambda$ has finite local complexity, there are uniform bounds on the second order Taylor remainders of $\mathcal{K}_z$ for any $z\in \mathcal{Z}_V$. As such, the terms $O(\|u_i-u'_i\|^4)$ in (\ref{eqn:contract}) are uniformly bounded. Therefore, for any $\rho\in(0,1)$, there exist $\lambda$ large enough and $\varepsilon$ small enough such that
$$\|(\Phi_{V,n,\mathbf{a}}^{\varepsilon,h,\lambda}(u))_i - (\Phi_{V,n,\mathbf{a}}^{\varepsilon,h,\lambda}(u'))_i\| \leq \rho \|u_i-u'_i\|,$$
which shows $\Phi_{V,\mathbf{a}, \eta}^{\varepsilon,h,\lambda}$ is a contraction.
\end{proof}
\begin{proof}[Proof of Theorem \ref{thm:main}]
Let $\Lambda\subset \mathbb{R}^d$ be a repetitive Delone set of finite local complexity and let $V\in C^3_\Lambda$ satisfy \ref{H:non-deg}. For $h:\mathbb{Z}^r\rightarrow \mathbb{R}^d$ and an interaction $\{H_B\}$ satisfying \ref{H:C2}, by Lemma \ref{lem:etaLarge}, there exist $\eta, R>0$ such that $\mathcal{W}_V^{\eta,h,R}\neq \varnothing$.

By Proposition \ref{prop:contract}, there exist $\lambda$ and $\varepsilon$ such that the map $\Phi_{V,\mathbf{a},\eta}^{\varepsilon,h,R}$ from $\Pi_{V,\eta}^{\varepsilon, h,R}(\mathbf{a})$ to itself, defined by (\ref{eqn:map}), is a contraction for any $\mathbf{a}\in\mathcal{W}_V^{\eta,h,R}$. Thus, for $\mathbf{a}\in\mathcal{W}_V^{\eta,h,R}$ the map $\Phi_{V,\mathbf{a},\eta}^{\varepsilon,h,R}$ has a unique fixed point: there exists a unique configuration $u\in \Pi_{V,\eta}^{\varepsilon, h,R}(\mathbf{a})$ such that 
\begin{equation}
  \label{eqn:fixedPt}
  (\Phi_{V,\mathbf{a},\eta}^{\varepsilon,h,\lambda}(u))_i = u_i = \mathcal{K}_{\mathbf{a}_i}\left( - \frac{\mathcal{Q}_i(u)}{\lambda}\right).
\end{equation}
Taking $\nabla V$ of both sides of (\ref{eqn:fixedPt}), one obtains (\ref{eqn:condition}), which says $u$ is an equilibrium configuration. Moreover, since $\mathbf{a}\in \mathcal{C}_{\mathbb{Z}^r,d}^h$, by (\ref{eqn:space}), $u\in \mathcal{C}_{\mathbb{Z}^r,d}^h$.
\end{proof}
\section{Proof of Theorem \ref{thm:main2}}
The proof of Theorem \ref{thm:main2} follows the same path as the proof of Theorem \ref{thm:main} which is given in \S \ref{sec:spaces}. This section goes over the proof of Theorem \ref{thm:main2} but will reference \S \ref{sec:spaces} frequently as the proof is similar to that of Theorem \ref{thm:main}.

This section considers self-affine, repetitive Delone sets $\Lambda\subset\mathbb{R}^d$ of finite local complexity. Let $A\in GL^+(d,\mathbb{R})$ be the expanding affinity matrix associated with $\Lambda$ with eigenvalues $\lambda_1\geq\cdots\geq \lambda_d>1$. Let
$$\mathcal{Z}_V^n := A^{-n}(\mathcal{Z}_V).$$
Since $A^{-n}$ is a contraction for every $n>0$ and, by Lemma \ref{lem:repZero}, the set $\mathcal{Z}_V$ is relatively dense, the following is straight-forward.
\begin{lemma}
  \label{lem:repZero2}
  Let $\Lambda$ be a repetitive Delone set and $f\in C^3_\Lambda$ be non-degenerate. Then for any $n\in \mathbb{Z}$ the set $\mathcal{Z}_f^n$ is a relatively dense set. Moreover, if $m>n$, then the covering radius of $\mathcal{Z}_f^n$ is larger than the covering radius of $\mathcal{Z}_f^m$.
\end{lemma}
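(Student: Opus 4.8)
The plan is to exploit that $\mathcal{Z}_f^n = A^{-n}(\mathcal{Z}_f)$ is the image of the relatively dense set $\mathcal{Z}_f$ under an invertible linear map, and that invertible linear maps distort covering radii in a controlled way. First I would record the elementary fact: if $L\in GL(d,\mathbb{R})$ and $S\subset\mathbb{R}^d$ has finite covering radius $\mathrm{cov}(S)$ (the infimum of those $R$ for which every closed $R$-ball meets $S$), then $\mathrm{cov}(LS)\le \|L\|_{op}\,\mathrm{cov}(S)$. Indeed, given $y\in\mathbb{R}^d$ pick $z\in S$ with $\|L^{-1}y-z\|\le \mathrm{cov}(S)$; then $Lz\in LS$ and $\|y-Lz\|\le \|L\|_{op}\|L^{-1}y-z\|\le \|L\|_{op}\,\mathrm{cov}(S)$. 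Applying this with $L=A^{-n}$, which is invertible (hence bi-Lipschitz) for every $n\in\mathbb{Z}$, together with Lemma \ref{lem:repZero}, which gives $\mathrm{cov}(\mathcal{Z}_f)<\infty$, yields $\mathrm{cov}(\mathcal{Z}_f^n)\le \|A^{-n}\|_{op}\,\mathrm{cov}(\mathcal{Z}_f)<\infty$, so $\mathcal{Z}_f^n$ is relatively dense for all $n\in\mathbb{Z}$.

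For the monotonicity clause I would write $\mathcal{Z}_f^m=A^{-(m-n)}(\mathcal{Z}_f^n)$ for $m>n$ and apply the same inequality with $L=A^{-(m-n)}$, giving $\mathrm{cov}(\mathcal{Z}_f^m)\le \|A^{-(m-n)}\|_{op}\,\mathrm{cov}(\mathcal{Z}_f^n)$. Since $A$ is expanding, $A^{-k}$ is a contraction (operator norm strictly below $1$) for every $k\ge 1$ — this is the fact already invoked in the paragraph preceding the lemma — so $\|A^{-(m-n)}\|_{op}<1$ and therefore $\mathrm{cov}(\mathcal{Z}_f^m)<\mathrm{cov}(\mathcal{Z}_f^n)$, provided $\mathrm{cov}(\mathcal{Z}_f^n)>0$. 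The latter holds because $\mathcal{Z}_f$ is a nonempty uniformly discrete set: its points are nondegenerate critical points of $f$, hence isolated, and finite local complexity of $\Lambda$ forces a uniform lower bound on their separation — the same uniformity exploited around (\ref{eqn:locHomeo}) and (\ref{eqn:DKbnd}) — so its invertible-linear image $\mathcal{Z}_f^n$ is not dense in $\mathbb{R}^d$ and thus has strictly positive covering radius.

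The only point requiring any care is the assertion that an expanding $A$ satisfies $\|A^{-k}\|_{op}<1$, which is not literally true in the Euclidean norm when $A$ has large shear components. This is resolved in the standard way: one works in a norm adapted to $A$ (a Lyapunov norm in which $\|A^{-1}\|<1$, which exists because the spectral radius of $A^{-1}$ is $<1$), and covering radii computed in equivalent norms differ only by a fixed multiplicative constant, so the strict monotonicity holds once such a norm is fixed; alternatively, since $\|A^{-k}\|_{op}^{1/k}\to\rho(A^{-1})<1$, one has $\|A^{-k}\|_{op}<1$ for all $k$ past a threshold, which is all Theorem \ref{thm:main2} uses. Apart from this mild subtlety the argument is just the linear-algebra bookkeeping above, which is why the statement is labeled straightforward in the text.
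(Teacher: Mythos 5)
Your argument is correct and takes essentially the same route as the paper, which offers no written proof beyond the remark immediately preceding the lemma that $A^{-n}$ is a contraction and that $\mathcal{Z}_V$ is relatively dense by Lemma \ref{lem:repZero}; your covering-radius inequality $\mathrm{cov}(LS)\le \|L\|_{op}\,\mathrm{cov}(S)$ is exactly the bi-Lipschitz bookkeeping that remark is gesturing at. Your two added points of care are both legitimate and not addressed in the text: strict monotonicity needs $\mathrm{cov}(\mathcal{Z}_f^n)>0$, which you correctly obtain from uniform discreteness of $\mathcal{Z}_f$ via finite local complexity, and an expanding matrix need not satisfy $\|A^{-1}\|_{op}<1$ in the Euclidean norm, so the claim for consecutive $n$ is only literally true after passing to an adapted (Lyapunov) norm or for $m-n$ past a threshold --- a harmless imprecision for Theorem \ref{thm:main2}, but worth recording.
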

Let
$$\mathcal{W}_{V}^n := \left\{ \mathbf{a}\in \mathcal{C}_{\mathbb{Z}^r, d} : \mathbf{a}_i\in \mathcal{Z}_V^n  \mbox{ for all }i\in\mathbb{Z}^r \right\}$$
and for any $h:\mathbb{Z}^r\rightarrow \mathbb{R}^d$, let $\mathcal{W}_{V}^{h,n} := \mathcal{W}_{V}^n \cap \mathcal{C}_{\mathbb{Z}^r, d}^h$. Finally, for $R\geq 0$, define $\mathcal{W}_{V}^{h,R,n} := \mathcal{W}_{V}^n \cap \mathcal{C}_{\mathbb{Z}^r, d}^{h,R}$. As in the previous section, these sets will be used to code equilibium configurations of the right type. The proof of the following lemma is the same as the proof of Lemma \ref{lem:nonempty1} with the aditional consideration of Lemma \ref{lem:repZero2} since the set $\mathcal{Z}_V^n$ is relatively dense for all $n\in\mathbb{N}$ with decreasing packing radius as $n$ increases.
\begin{lemma}
\label{lem:nonempty2}
  Let $\Lambda\subset\mathbb{R}^d$ be a repetitive Delone set, $r\in\mathbb{N}$, $V\in C^3_\Lambda$ a non-degenerate, $\Lambda$-equivariant function, and $h:\mathbb{Z}^r\rightarrow \mathbb{R}^d$. Then there exists a $R(V)$ such that for all $R>R(V)$, $\mathcal{W}_{V}^{h,R,n}\neq \varnothing$ for all $n\in\mathbb{N}\cup\{0\}$.
\end{lemma}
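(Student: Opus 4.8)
The plan is to follow the proof of Lemma \ref{lem:nonempty1} essentially verbatim, the only new ingredient being the uniformity of the covering radii of the sets $\mathcal{Z}_V^n$ across all $n\in\mathbb{N}\cup\{0\}$, which is exactly what the monotonicity assertion in Lemma \ref{lem:repZero2} supplies. First I would invoke Lemma \ref{lem:repZero2}: since $V$ is non-degenerate and $\Lambda$-equivariant, each $\mathcal{Z}_V^n = A^{-n}(\mathcal{Z}_V)$ is relatively dense, and because $A^{-1}$ is a contraction the covering radius of $\mathcal{Z}_V^n$ is non-increasing in $n$; in particular it is bounded above by the covering radius of $\mathcal{Z}_V^0 = \mathcal{Z}_V$ for every $n\geq 0$. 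Fix $r_Z>0$ strictly larger than the covering radius of $\mathcal{Z}_V$. Then for every $n\in\mathbb{N}\cup\{0\}$ and every $x\in\mathbb{R}^d$ the ball $B_{r_Z}(x)$ meets $\mathcal{Z}_V^n$, and crucially $r_Z$ does not depend on $n$.

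Next, given $\sigma=\{v_1,\dots,v_d\}\in(\mathbb{R}^r)^d$, I would form the subspace $P_\sigma$ and the matrix $M_\sigma$ as in Lemma \ref{lem:nonempty1}. For each $i\in\mathbb{Z}^r$ choose a point $\mathbf{a}_i\in B_{r_Z}(i\cdot M_\sigma)\cap\mathcal{Z}_V^n$, which is possible by the previous step. The configuration $\mathbf{a}=(\mathbf{a}_i)_{i\in\mathbb{Z}^r}$ then satisfies $\mathbf{a}_i\in\mathcal{Z}_V^n$ for all $i$, hence $\mathbf{a}\in\mathcal{W}_V^n$, and $\|\mathbf{a}_i - i\cdot M_\sigma\|\leq r_Z$ for all $i$, hence $\mathbf{a}\in\mathcal{C}_{\mathbb{Z}^r,d}^{\sigma(R)}$ for any $R\geq r_Z$. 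Taking $R(V)=2r_Z$, exactly as in Lemma \ref{lem:nonempty1} and independently of $n$, one gets $\mathbf{a}\in\mathcal{W}_V^{\sigma(R),n}$ for every $R>R(V)$ and every $n\in\mathbb{N}\cup\{0\}$, so this set is nonempty.

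There is no real obstacle here beyond recording the monotonicity of the covering radii; the one point that deserves an explicit sentence is precisely why $R(V)$ may be chosen independent of $n$, and that is because the covering radius of $\mathcal{Z}_V^0$ dominates that of every $\mathcal{Z}_V^n$ with $n\geq 0$, so a single radius $r_Z$ works uniformly in $n$. (If one wished, the same argument produces, for each fixed $n$, a configuration with $\|\mathbf{a}_i - i\cdot M_\sigma\|$ bounded by the covering radius of $\mathcal{Z}_V^n$, which shrinks as $n$ grows, but the uniform bound $2r_Z$ is all that is needed downstream.)
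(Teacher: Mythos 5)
Your proof is correct and follows essentially the same route as the paper, which simply repeats the argument of Lemma \ref{lem:nonempty1} and invokes Lemma \ref{lem:repZero2} to note that the covering radii of the sets $\mathcal{Z}_V^n$ are dominated by that of $\mathcal{Z}_V^0=\mathcal{Z}_V$, so a single $R(V)=2r_Z$ works uniformly in $n$. Your explicit remark on why $R(V)$ can be chosen independent of $n$ is exactly the ``additional consideration'' the paper alludes to.
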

For an interaction $\{H_B\}_{B\subset \mathcal{S}}$, $\eta>0$, and $n\in\mathbb{N}\cup\{0\}$, let $\mathcal{W}_{V}^{\eta,h,n}:= \mathcal{W}_{V}^\eta\cap\mathcal{W}_{V}^{h,n}$ and, for any $R>0$, $\mathcal{W}_{V}^{\eta,h,R,n}:= \mathcal{W}_{V}^\eta\cap\mathcal{W}_{V}^{h,R,n}$ where $\mathcal{W}_{V}^\eta$ is defined in (\ref{eqn:Weta}). The proof of the following lemma follows the same liness as the proof of Lemma \ref{lem:etaLarge} in the previous section.
\begin{lemma}
  \label{lem:etaLarge2}
Let $\Lambda\subset\mathbb{R}^d$ be a repetitive Delone set, $V\in C^3_\Lambda$ a non-degenerate, $\Lambda$-equivariant function, $h:\mathbb{Z}^r\rightarrow \mathbb{R}^d$, and $\{H_B\}$ an interaction satisfying \ref{H:C2}. Then there exists an $\eta_*>0$ and a $R_2>0$ such that for any $\eta>\eta_*$ and $R>R_2$, $\mathcal{W}_{V}^{\eta,h,R,n} \neq \varnothing$ for any $n\in\mathbb{N}\cup\{0\}$.
\end{lemma}

For $\eta, R$ large enough (Lemma \ref{lem:etaLarge2}), $h:\mathbb{Z}^r\rightarrow \mathbb{R}^d$, $n\in\mathbb{N}\cup\{0\}$, $\mathbf{a}\in\mathcal{W}_{V}^{\eta,h,R,n}$, and $\varepsilon>0$, let
\begin{equation}
  \label{eqn:space2}
  \Pi^{\varepsilon,h}_{V,\eta}(\mathbf{a}):= \left\{u\in\mathcal{C}_{\mathbb{Z}^r,d}: \|u_i - \mathbf{a}_i\|\leq \varepsilon \mbox{ for all }i \in\mathbb{Z}^r\right\},
\end{equation}
which is endowed with the metric (\ref{eqn:metric}). Note that $\Pi^{\varepsilon,h}_{V,\eta}(\mathbf{a})\subset \mathcal{C}_{\mathbb{Z}^r,d}^h$. Moreover, if $\mathbf{a}\in \mathcal{W}_V^n$, then $A^n \mathbf{a}_i\in\mathcal{Z}_V$ for all $i\in\mathbb{Z}^r$ and so the map $\mathcal{K}_{A^n\mathbf{a}_i}$ from (\ref{eqn:locHomeo}) is defined.

\begin{lemma}
  \label{lem:map2}
  Under the hypotheses of Lemma \ref{lem:etaLarge2}, there exists an $N\geq 0$ and $\varepsilon'>0$ such that for any $n\geq N$ the map $\Phi_{V,\mathbf{a},\eta}^{\varepsilon,h,n}: \Pi^{\varepsilon,h}_{V,\eta}(\mathbf{a})\rightarrow \Pi^{\varepsilon',h}_{V,\eta}(\mathbf{a})$ given by
\begin{equation}
\label{eqn:map2}
  (\Phi_{V,\mathbf{a},\eta}^{\varepsilon,h,n}(u))_i = A^{-n}\cdot \mathcal{K}_{A^n\mathbf{a}_i}\left( - \mathcal{Q}_i(u)\cdot A^{-n}\right),
\end{equation}
 $i\in\mathbb{Z}^r$, is well-defined for any $\mathbf{a}\in\mathcal{W}_{V}^{\eta,h,R,n}$, where $\eta, R$ are chosen large enough as guaranteed by Lemma \ref{lem:etaLarge2}.
\end{lemma}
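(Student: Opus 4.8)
The plan is to mimic the proof of Lemma \ref{lem:map} almost verbatim, with the division by $\lambda$ replaced by right-multiplication by the contracting matrix $A^{-n}$. First I would record the quantitative behavior of $A^{-n}$: since $A$ is expanding with all eigenvalues of modulus $>1$, there are constants $C>0$ and $0<\theta<1$ such that $\|A^{-n}\|_{op}\leq C\theta^n$ for all $n\geq 0$ (one may take $\theta$ slightly larger than $1/\lambda_d$). In particular $\|A^{-n}\|_{op}\to 0$ as $n\to\infty$.

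Next I would verify that the argument of $\mathcal{K}_{A^n\mathbf{a}_i}$ lands in its domain $\overline{B_{R_V}(\bar 0)}$. For $\mathbf{a}\in\mathcal{W}_V^{\eta,\sigma(R),n}$ and $u\in\Pi_{V,\eta}^{\varepsilon,\sigma}(\mathbf{a})$, assumption \ref{H:C2} gives $\sup_{i}\|\mathcal{Q}_i(u)\|<B_{R,\sigma,\{H_B\}}$ once $\varepsilon$ is fixed (say $\varepsilon\leq 1$, using the $C^2$-bound on $\mathcal{C}_{r,d}^{\sigma(R+1)}$), hence
\[
\|\mathcal{Q}_i(u)\cdot A^{-n}\|\leq \|\mathcal{Q}_i(u)\|\,\|A^{-n}\|_{op}\leq B_{R,\sigma,\{H_B\}}\,C\theta^n.
\]
Choosing $N$ so that $B_{R,\sigma,\{H_B\}}\,C\theta^N<R_V$ guarantees that for every $n\geq N$ the point $-\mathcal{Q}_i(u)\cdot A^{-n}$ lies in $\overline{B_{R_V}(\bar 0)}$, so $\mathcal{K}_{A^n\mathbf{a}_i}$ is defined there; here I use that $A^n\mathbf{a}_i\in\mathcal{Z}_V$ (since $\mathbf{a}_i\in\mathcal{Z}_V^n=A^{-n}\mathcal{Z}_V$), so the family of local inverses $\{\mathcal{K}_z\}_{z\in\mathcal{Z}_V}$ from (\ref{eqn:locHomeo}) applies, together with the uniform bound (\ref{eqn:DKbnd}) coming from finite local complexity. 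Therefore $(\Phi_{V,\mathbf{a},\eta}^{\varepsilon,\sigma,n}(u))_i = A^{-n}\cdot\mathcal{K}_{A^n\mathbf{a}_i}(-\mathcal{Q}_i(u)\cdot A^{-n})$ is well-defined.

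Finally I would produce $\varepsilon'$ witnessing that the image lies in $\Pi^{\varepsilon',\sigma}_{V,\eta}(\mathbf a)$. Since $\mathcal{K}_z(\bar 0)=z$, we have $\mathcal{K}_{A^n\mathbf{a}_i}(-\mathcal{Q}_i(u)\cdot A^{-n}) = A^n\mathbf{a}_i + E_i$ where, by (\ref{eqn:DKbnd}),
\[
\|E_i\| \leq K_V\,\|\mathcal{Q}_i(u)\cdot A^{-n}\| \leq K_V\,B_{R,\sigma,\{H_B\}}\,C\theta^n.
\]
Applying $A^{-n}$ gives $(\Phi_{V,\mathbf{a},\eta}^{\varepsilon,\sigma,n}(u))_i = \mathbf{a}_i + A^{-n}E_i$ with $\|A^{-n}E_i\|\leq C\theta^n\,K_V\,B_{R,\sigma,\{H_B\}}\,C\theta^n = K_V B_{R,\sigma,\{H_B\}} C^2\theta^{2n}$, so one may take $\varepsilon' := K_V\,B_{R,\sigma,\{H_B\}}\,C^2\theta^{2N}$ (or simply $\varepsilon'=\varepsilon$ after enlarging $N$), which is well-defined by finite local complexity and \ref{H:non-deg}. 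I expect the only genuinely delicate point is the bookkeeping of the two matrix factors of $A^{-n}$ — one inside the argument of $\mathcal{K}$, one multiplying outside — and making sure the bound $B_{R,\sigma,\{H_B\}}$ is taken with respect to a slightly enlarged radius so that it controls $\mathcal{Q}_i$ on all of $\Pi_{V,\eta}^{\varepsilon,\sigma}(\mathbf a)$ uniformly in $n$; everything else is a routine transcription of Lemma \ref{lem:map}.
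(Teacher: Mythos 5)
Your proposal is correct and follows essentially the same route as the paper's proof: use \ref{H:C2} to bound $\|\mathcal{Q}_i(u)\|$, use the expansion of $A$ to make $\|\mathcal{Q}_i(u)\cdot A^{-n}\|<R_V$ for $n$ large so the argument lies in the domain of $\mathcal{K}_{A^n\mathbf{a}_i}$, and then bound the displacement from $\mathbf{a}_i$ to produce $\varepsilon'$ (the paper takes $\varepsilon'$ as a max of $\|z-\mathcal{K}_z(x)\|/\lambda_d^n$ over the finitely many local charts, while you derive an equivalent bound from (\ref{eqn:DKbnd})). If anything you are slightly more careful than the paper on two points it glosses over --- writing $\|A^{-n}\|_{op}\leq C\theta^n$ rather than tacitly assuming $\|A^{-n}\|_{op}\leq\lambda_d^{-n}$, and evaluating the $C^2$ bound on the enlarged radius $R+\varepsilon$ so that it controls $\mathcal{Q}_i$ on all of $\Pi^{\varepsilon,\sigma}_{V,\eta}(\mathbf{a})$.
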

\begin{proof}
  By \ref{H:C2}, for $n> \log(B_{R,h,\{H_B\}}/R_V)/\log(\lambda_d)$, $\mathbf{a}\in\mathcal{W}_V^{\eta,h,R,n}$ and $u\in \Pi_{V,\eta}^{\varepsilon,h}(\mathbf{a})$, the bound $R_i(n):= \|\mathcal{Q}_i(u)\cdot A^{-n}\|<R_V$ holds for any $i\in\mathbb{Z}^r$, so $-\mathcal{Q}_i(u)\cdot A^{-n}$ is in the domain of the map $\mathcal{K}_{A^n \mathbf{a}_i}$. As such, the map is well defined. The value $\varepsilon'$ can be defined as
  $$\varepsilon' := \max_{z\in\mathcal{Z}_V} \max_{x\in \overline{B_{R_i(n)}(\bar{0})}}\frac{\| z - \mathcal{K}_z(x) \|}{\lambda_d^n}$$
which, by finite local complexity and \ref{H:non-deg}, is well defined.
\end{proof}

\begin{proposition}
  \label{prop:contract2}
    Under the hypotheses of Lemma \ref{lem:etaLarge2}, there exists an $N_*>0$ and $\varepsilon_*>0$ such that for any $n>N_*$ and $\varepsilon<\varepsilon_*$, the map $\Phi_{V,\mathbf{a},\eta}^{\varepsilon,h,n}$ defined by (\ref{eqn:map2}) is a map from $\Pi^{\varepsilon,h}_{V,\eta}(\mathbf{a})$ into itself and is a contraction for any $\mathbf{a}\in\mathcal{W}_{V}^{\eta,h,R,n}$, where $\eta, R$ are chosen large enough as guaranteed by Lemma \ref{lem:etaLarge2}.
\end{proposition}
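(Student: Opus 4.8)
The plan is to repeat the proof of Proposition~\ref{prop:contract} essentially verbatim, with the scalar contraction $\lambda^{-1}$ replaced by the linear map $A^{-n}$, the one new feature being that $A^{-n}$ appears \emph{twice} in (\ref{eqn:map2}): once conjugating the argument of $\mathcal{K}_{A^n\mathbf{a}_i}$ and once rescaling its output. The single analytic input is that $A$ is expanding, so $\|A^{-n}\|_{op}\to 0$ as $n\to\infty$. Unlike the scalar case one cannot work with a fixed ratio, since $\|A^{-1}\|_{op}$ may exceed $1$ when $A$ is far from normal; every estimate is therefore phrased in terms of $\|A^{-n}\|_{op}$, and the two appearances of $A^{-n}$ conveniently produce quadratic decay $\|A^{-n}\|_{op}^2$.

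For the self-map property, Lemma~\ref{lem:map2} already sends $\Pi^{\varepsilon,\sigma}_{V,\eta}(\mathbf{a})$ into $\Pi^{\varepsilon',\sigma}_{V,\eta}(\mathbf{a})$; to obtain $\varepsilon'\leq\varepsilon$ I would use $\mathbf{a}_i=A^{-n}\,\mathcal{K}_{A^n\mathbf{a}_i}(\bar 0)$ (valid since $\mathcal{K}_z(\bar 0)=z$ and $A^n\mathbf{a}_i\in\mathcal{Z}_V$) and estimate
$$\|(\Phi_{V,\mathbf{a},\eta}^{\varepsilon,\sigma,n}(u))_i-\mathbf{a}_i\|=\bigl\|A^{-n}\bigl(\mathcal{K}_{A^n\mathbf{a}_i}(-\mathcal{Q}_i(u)A^{-n})-\mathcal{K}_{A^n\mathbf{a}_i}(\bar 0)\bigr)\bigr\|\leq \|A^{-n}\|_{op}^2\,K_V\,\sup_i\|\mathcal{Q}_i(u)\|,$$
using the derivative bound (\ref{eqn:DKbnd}). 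Since $u\in\Pi^{\varepsilon,\sigma}_{V,\eta}(\mathbf{a})\subset\mathcal{C}^{\sigma(R+\varepsilon)}_{r,d}$, assumption \ref{H:C2} bounds $\sup_i\|\mathcal{Q}_i(u)\|$ uniformly, and the right side tends to $0$ with $n$.

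For the contraction, given $u,u'\in\Pi^{\varepsilon,\sigma}_{V,\eta}(\mathbf{a})$ and $i\in\mathbb{Z}^r$, I would Taylor-expand $\mathcal{K}_{A^n\mathbf{a}_i}$ about $-\mathcal{Q}_i(u')A^{-n}$ exactly as in (\ref{eqn:contract}), getting
$$\|(\Phi_{V,\mathbf{a},\eta}^{\varepsilon,\sigma,n}(u))_i-(\Phi_{V,\mathbf{a},\eta}^{\varepsilon,\sigma,n}(u'))_i\|\leq \|A^{-n}\|_{op}\Bigl(K_V\,\|(\mathcal{Q}_i(u)-\mathcal{Q}_i(u'))A^{-n}\|+O\bigl(\|(\mathcal{Q}_i(u)-\mathcal{Q}_i(u'))A^{-n}\|^2\bigr)\Bigr),$$
where \ref{H:C2} gives $\|\mathcal{Q}_i(u)-\mathcal{Q}_i(u')\|\leq B_{R,\sigma,\{H_B\}}\,\delta(u,u')$ via the Hessian bound, and finite local complexity again yields a uniform bound on the second-order Taylor remainders of the $\mathcal{K}_z$, $z\in\mathcal{Z}_V$. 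Using $\delta(u,u')\leq 2\varepsilon$ on $\Pi^{\varepsilon,\sigma}_{V,\eta}(\mathbf{a})$ and taking the supremum over $i$ this reads
$$\delta\bigl(\Phi_{V,\mathbf{a},\eta}^{\varepsilon,\sigma,n}(u),\Phi_{V,\mathbf{a},\eta}^{\varepsilon,\sigma,n}(u')\bigr)\leq\bigl(\|A^{-n}\|_{op}^2\,K_V\,B_{R,\sigma,\{H_B\}}+C\,\|A^{-n}\|_{op}^3\,\varepsilon\bigr)\,\delta(u,u').$$
Fixing $\rho\in(0,1)$, I would then choose $N_*$ so that the first term is $<\rho/2$ for $n>N_*$, and $\varepsilon_*$ small enough (also small enough that the self-map estimate above closes) that the second term is $<\rho/2$ for $\varepsilon<\varepsilon_*$; this produces the desired contracting self-map.

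As in Proposition~\ref{prop:contract}, the substance is not in the one-line Taylor estimates but in the \emph{uniformity} of the constants $K_V$, the remainder bounds, and $B_{R,\sigma,\{H_B\}}$ across the infinitely many base points $A^n\mathbf{a}_i\in\mathcal{Z}_V$ and indices $i\in\mathbb{Z}^r$. This is precisely what repetitivity and finite local complexity of $\Lambda$ deliver — $\mathcal{Z}_V$ is relatively dense and the local inverses $\mathcal{K}_z$ of $\nabla V$ come in only finitely many shapes up to translation — together with \ref{H:C2}. The only genuinely new point relative to the $\lambda$-limit is treating $A^{-n}$ as a matrix that is merely \emph{eventually} contracting rather than a fixed scalar ratio, so I expect this bookkeeping (and keeping $\varepsilon$ small relative to $n$ so the output radius does not exceed $\varepsilon$) to be the only place where care beyond Proposition~\ref{prop:contract} is required.
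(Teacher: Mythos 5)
Your proposal is correct and follows essentially the same route as the paper: Taylor-expand $\mathcal{K}_{A^n\mathbf{a}_i}$, use the Hessian bound from \ref{H:C2} and the uniform bounds (\ref{eqn:DKbnd}) and on the Taylor remainders coming from finite local complexity, observe that the two occurrences of $A^{-n}$ give a quadratically decaying contraction factor, and choose $n$ large and $\varepsilon$ small. You are in fact somewhat more careful than the paper on two points it passes over — explicitly verifying the self-map property (the image radius must not exceed $\varepsilon$, which couples the choice of $N_*$ to $\varepsilon_*$) and working with $\|A^{-n}\|_{op}$ rather than the eigenvalue bound $\lambda_d^{-n}$, which need not dominate the operator norm when $A$ is not normal.
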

\begin{proof}
Let $i\in \mathbb{Z}^r$. By expanding the map $(\Phi_{V,\mathbf{a},\eta}^{\varepsilon,h,n}(u))_i$ in Taylor series, for two configurations $u,u'\in \Pi_{V,\eta}^{\varepsilon,h,n}(\mathbf{a})$ it follows that
\begin{equation}
\label{eqn:contract2}
\begin{split}
  \|(\Phi_{V,\mathbf{a},\eta}^{\varepsilon,h,n}(u))_i &- (\Phi_{V,\mathbf{a},\eta}^{\varepsilon,h,n}(u'))_i\|^2\\
  &\leq \|u_i-u'_i\|^2\max_\ell\frac{d^2\|\nabla \mathcal{K}_{A^n \mathbf{a}_i}^\ell\|^2_1}{\lambda_d^{2n}} \cdot 
  \left\| \mathbf{H}\left(\sum_{B\in\mathcal{S}_i}H_B \right)(u)  \right\|_{1,1}^2 \lambda^{-2n}_d + O(\|u_i-u'_i\|^4)\\
  &\leq  \|u_i-u'_i\|^2 K_V^2 B_{R,h, \{H_B\}}^2\lambda_d^{-4n} + O(\|u_i-u'_i\|^4) \;\;\;\;\;\;\;\;\mbox{(by (\ref{eqn:DKbnd}) and \ref{H:C2}) }.
 \end{split}
\end{equation}
Since $\Lambda$ has finite local complexity, there are have uniform bounds on the second order Taylor remainders of $\mathcal{K}_z$ for any $z\in \mathcal{Z}_V$. As such, the terms $O(\|u_i-u'_i\|^2)$ in (\ref{eqn:contract2}) are uniformly bounded. Therefore, for any $\rho\in(0,1)$, there exist $n$ large enough and $\varepsilon$ small enough such that
$$\|(\Phi_{V,n,\mathbf{a}}^{\varepsilon,h,n}(u))_i - (\Phi_{V,n,\mathbf{a}}^{\varepsilon,h,n}(u'))_i\| \leq \rho \|u_i-u'_i\|,$$
which shows $\Phi_{V,\mathbf{a}, \eta}^{\varepsilon,h,n}$ is a contraction.
\end{proof}
\begin{proof}[Proof of Theorem \ref{thm:main2}]
Let $\Lambda\subset \mathbb{R}^d$ be a self-affine, repetitive Delone set of finite local complexity and let $V\in C^3_\Lambda$ satisfy \ref{H:non-deg}. For $h:\mathbb{Z}^r\rightarrow \mathbb{R}^d$ and an interaction $\{H_B\}$ satisfying \ref{H:C2}, by Lemma \ref{lem:etaLarge}, there exist $\eta, R>0$ such that $\mathcal{W}_V^{\eta,h,R}\neq \varnothing$. Let $\mathbf{a}\in\mathcal{W}_V^{\eta,h,R}$.

By Proposition \ref{prop:contract2}, there exist $n\geq 0$ and $\varepsilon>0$ such that the map $\Phi_{V,\mathbf{a},\eta}^{\varepsilon,h,R,n}$ from $\Pi_{V,\eta}^{\varepsilon, h,R}(\mathbf{a})$ to itself, defined by (\ref{eqn:map2}), is a contraction. Thus, this map has a unique fixed point: there exists a unique configuration $u\in \Pi_{V,\eta}^{\varepsilon, h,R}(\mathbf{a})$ such that 
\begin{equation}
  \label{eqn:fixedPt2}
  (\Phi_{V,\mathbf{a},\eta}^{\varepsilon,h,n}(u))_i = u_i = A^{-n}\cdot \mathcal{K}_{A^n \mathbf{a}_i}\left( - \mathcal{Q}_i(u)\cdot A^{-n}\right).
\end{equation}
Multiplying on the left by $A^n$ and then taking $\nabla V$ of both sides of (\ref{eqn:fixedPt2}), one obtain (\ref{eqn:condition2}), which says $u$ is an equilibrium configuration. Moreover, since $\mathbf{a}\in \mathcal{C}_{\mathbb{Z}^r,d}^h$, by (\ref{eqn:space}), $u\in \mathcal{C}_{\mathbb{Z}^r,d}^h$.
\end{proof}

\bibliographystyle{amsalpha}
\bibliography{biblio}

\end{document}